\newtheorem{faulse assertion}{faulse assertion}
\newtheorem{counter-example}{counter-example}
\begin{document}

\title{Unique expansion matroids and union minimal matroids}         

\author{Hua Yao, William Zhu\thanks{Corresponding author.
E-mail: williamfengzhu@gmail.com(William Zhu)} }
\institute{Lab of Granular Computing,\\
Zhangzhou Normal University, Zhangzhou, China}



\date{\today}          
\maketitle

\begin{abstract}
The expansion axiom of matroids requires only the existence of some kind of independent sets, not the uniqueness of them. This causes that the base families of some matroids can be reduced while the unions of the base families of these matroids remain unchanged. In this paper, we define unique expansion matroids in which the expansion axiom has some extent uniqueness; we define union minimal matroids in which the base families have some extent minimality. Some properties of them and the relationship between them are studied.
First, we propose the concepts of secondary base and forming base family. Secondly, we propose the concept of unique expansion matroid, and prove that a matroid is a unique expansion matroid if and only if its forming base family is a partition. Thirdly, we propose the concept of union minimal matroid, and prove that unique expansion matroids are union minimal matroids. Finally, we extend the concept of unique expansion matroid to unique exchange matroid and prove that both unique expansion matroids and their dual matroids are unique exchange matroids.
\newline
\textbf{Keywords.} Base; Forming base family; Unique expansion matroid; Union minimal matroid; Unique partition matroid; Unique exchange matroid.
\end{abstract}

\section{Introduction}
Matroids were introduced by Whitney~\cite{Whitney1935Ontheabstract} in 1935 to try to
capture abstractly the essence of dependence. Since then, it has been recognized that matroids arise naturally in combinatorial optimization. Matroids have been applied to diverse fields such as algorithm design~\cite{Edmonds71Matroids}, combinatorial optimization~\cite{Lawler01Combinatorialoptimization}. Recently, matroids have been combined with rough sets~\cite{HuangZhu12Geometriclattice,LiuZhuZhang12Relationshipbetween,TangSheZhu12matroidal,WangZhu11Matroidal,ZhuWang11Rough}.

There are several ways to define a matroid, and independent set axiom is one of them. It presents three properties of independent set, and the third one is called expansion axiom. It indicates that if there exist two independent sets whose cardinalities are not equal and the independent set whose cardinality is smaller is not a subset of the other, there exists some other independent set of which the independent set whose cardinality is smaller is a proper subset. Generally, such independent sets are more than one, for the expansion axiom does not require that such independent set is unique. This causes that the base families of some matroids can be reduced while the unions of the base families of these matroids remain unchanged.

In this paper, we define unique expansion matroids in which the expansion axiom has some extent uniqueness; we define union minimal matroids in which the base families have some extent minimality. Some properties of them and the relationship between them are studied. First, we propose the concepts of secondary base and forming base family, and study some properties of forming base families in detail. Secondly, we propose the concept of unique expansion matroid. The expression of the base families of this type of matroids is presented. We prove that a matroid is a unique expansion matroid if and only if its forming base family is a partition. Thirdly, we propose the concept of unique partition matroid. We prove that a matroid is a unique expansion matroid if and only if it is a unique partition matroid. Fourthly, we propose the concepts of union minimal matroid and intersection minimal matroid. That union minimal matroid and intersection minimal matroid are dual is proved. We prove that unique expansion matroids are union minimal matroids. Finally, we extend the concept of unique expansion matroid to unique exchange matroid and prove that both unique expansion matroids and their dual matroids are unique exchange matroids.

The remainder of this paper is organized as follows. In Section~\ref{S:Preliminaries}, we review the relevant concepts. In Section~\ref{Secondary base unique expansion matroid}, we give the concepts of secondary base, forming base family, unique expansion matroid and unique partition matroid. Then we study the properties of them and the relationship between them. In Section~\ref{Union minimal matroid}, we give the concepts of union minimal matroid and intersection minimal matroid. Then we prove that unique expansion matroids are union minimal matroids. In Section~\ref{Unique exchange matroid}, we extend the concept of unique expansion matroid to unique exchange matroid and prove that both unique expansion matroids and their dual matroids are unique exchange matroids. Section~\ref{S:Conclusions} presents conclusions.

\section{Preliminaries}
\label{S:Preliminaries}

For a better understanding to this paper, in this section, the concepts of covering and partition and some basic knowledge of matroids are introduced. In this paper, we denote $\cup_{X\in S}X$ by $\cup S$, where $S$ is a set family.

\begin{definition}(Covering)
\label{definition67}
Let $E$ be a universe of discourse and $\mathbf{C}$ be a family of subsets of $E$.
If $\emptyset\notin\mathbf{C}$ and $\cup \mathbf{C}=E$, $\mathbf{C}$ is called a covering of $E$.
Every element of $\mathbf{C}$ is called a covering block.
\end{definition}

In the following discussion, unless stated to the contrary, the universe of discourse $E$ is considered to be
finite and nonempty. If it is demanded that any two blocks of a covering have no common elements, we obtain the concept of partition.

\begin{definition}(Partition)
\label{definition68}
Let $E$ be a universe of discourse and $\mathbf{P}$ be a family of subsets of $E$.
$\mathbf{P}$ is called a partition of $E$ if the following conditions hold: $(1)$ $\emptyset\notin\mathbf{P}$; $(2)$
$\cup \mathbf{P}=E$; $(3)$ for any $K,L\in\mathbf{P}$, $K\cap L=\emptyset$.
Every element of $\mathbf{P}$ is called a partition block.
\end{definition}

It is obvious a partition of $E$ is certainly a covering of $E$. So the concept of covering is an extension of the concept of partition.
Some operational symbols in set theory will be used in this paper. We introduce the definitions of them as follows.

\begin{definition}(~\cite{Lai01Matroid})
\label{definition1}
Let $E$ be a set and $\mathcal{A}$ a family of subsets of $E$. Three operators are defined as follows:\\
$Low(\mathcal{A})=\{X\subseteq E|\exists A(A\in\mathcal{A}\wedge X\subseteq A)\}$, \\
$Max(\mathcal{A})=\{X\in\mathcal{A}|\forall Y(Y\in\mathcal{A}\wedge X\subseteq Y\rightarrow X=Y)\}$,\\ $Com(\mathcal{A})=\{X\subseteq E|E-X\in\mathcal{A}\}$.
\end{definition}

There are several ways to define matroids, and the following is one of them.

\begin{definition}(Matroid~\cite{Lai01Matroid})
\label{definition2}
A matroid $M$ is an ordered pair $(E,\mathcal{I})$, where $E$ is a finite set, $\mathcal{I}$ is a collection of subsets of $E$ and $\mathcal{I}$ satisfies the following three properties:\\
(I1) $\emptyset\in\mathcal{I}$;\\
(I2) if $I\in\mathcal{I}$ and $I^{\prime}\subseteq I$, $I^{\prime}\in\mathcal{I}$;\\
(I3) if $I_{1},I_{2}\in\mathcal{I}$ and $|I_{1}|<|I_{2}|$, there exists $e\in I_{2}-I_{1}$ such that $I_{1}\cup\{e\}\in\mathcal{I}$.
\end{definition}

Every element of $\mathcal{I}$ is called an independent set of matroid $M$. Matroid $M$ is usually denoted as $M(E,\mathcal{I})$. Sometimes, $\mathcal{I}$ in $M(E,\mathcal{I})$ is denoted as $\mathcal{I}(M)$; $E$ in $M(E,\mathcal{I})$ is denoted as $E(M)$.

\begin{definition}(Base~\cite{Lai01Matroid})
\label{definition3}
Let $M(E,\mathcal{I})$ be a matroid. $Max(\mathcal{I})$ is denoted as $\mathcal{B}(M)$ and called the base family of $M(E,\mathcal{I})$. Any $B\in\mathcal{B}(M)$ is called a base of $M(E,\mathcal{I})$.
\end{definition}

The following proposition indicates that all the bases have the same cardinality.

\begin{proposition}(~\cite{Lai01Matroid})
\label{proposition100}
Let $M$ be a matroid and $B_{1},B_{2}\in\mathcal{B}(M)$. Then $|B_{1}|=|B_{2}|$.
\end{proposition}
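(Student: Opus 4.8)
The plan is to prove Proposition~\ref{proposition100} directly from the independent set axioms in Definition~\ref{definition2}, since bases are by definition the maximal independent sets (Definition~\ref{definition3}). The argument is the classical exchange/cardinality argument and proceeds by contradiction.

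First I would assume, for contradiction, that $B_1, B_2 \in \mathcal{B}(M)$ with $|B_1| \neq |B_2|$; without loss of generality say $|B_1| < |B_2|$. Since $\mathcal{B}(M) = Max(\mathcal{I})$, both $B_1$ and $B_2$ are independent sets, i.e. $B_1, B_2 \in \mathcal{I}$. Now I apply the expansion axiom (I3) to the pair $B_1, B_2$: because $|B_1| < |B_2|$, there exists $e \in B_2 - B_1$ such that $B_1 \cup \{e\} \in \mathcal{I}$. But $e \notin B_1$, so $B_1 \subsetneq B_1 \cup \{e\}$, and $B_1 \cup \{e\}$ is an independent set strictly containing $B_1$. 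This contradicts the maximality of $B_1$ in $Max(\mathcal{I})$, namely the condition that for all $Y \in \mathcal{I}$ with $B_1 \subseteq Y$ we have $B_1 = Y$. Hence $|B_1| = |B_2|$.

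There is really no main obstacle here; the only point requiring a little care is unwinding the definition of $Max(\mathcal{A})$ from Definition~\ref{definition1} to see that a base is precisely an inclusion-maximal independent set, so that the production of a strictly larger independent set $B_1 \cup \{e\}$ is genuinely a contradiction. One should also note explicitly that $B_2 - B_1$ is nonempty under the assumption $|B_1| < |B_2|$ (indeed (I3) itself guarantees the witness $e$ exists in $B_2 - B_1$), so no separate case analysis is needed. The proof uses only (I3) together with the definition of base and does not even require (I1) or (I2).
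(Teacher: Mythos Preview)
Your proof is correct and is the standard exchange/cardinality argument. Note, however, that the paper does not give its own proof of this proposition: it is stated as a known result cited from~\cite{Lai01Matroid}, so there is no in-paper proof to compare against.
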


By the above proposition, we give the following definition.

\begin{definition}
\label{definition101}
Let $M$ be a matroid and $B\in\mathcal{B}(M)$. $|B|$ is denoted as $r(M)$ and called the rank of $M$.
\end{definition}

For the commutativity of bases, we have the following two theorems.

\begin{theorem}(~\cite{Lai01Matroid})
\label{theorem4}
Let $E$ be a finite and nonempty set and $\mathcal{B}\subseteq2^{E}$. Then there exists a matroid $M(E,\mathcal{I})$ such that $\mathcal{B}=\mathcal{B}(M)$ iff $\mathcal{B}$ satisfies the following two properties:\\
(B1) $\mathcal{B}\neq\emptyset$;\\
(B2) if $B_{1},B_{2}\in\mathcal{B}$ and $x\in B_{1}-B_{2}$, there exists $y\in B_{2}-B_{1}$ such that $(B_{1}-\{x\})\cup\{y\}\in\mathcal{B}$.
\end{theorem}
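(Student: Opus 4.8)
The plan is to prove the equivalence in the two obvious directions, with the converse carrying essentially all of the work. For the easy direction, suppose $M(E,\mathcal{I})$ is a matroid with $\mathcal{B}=\mathcal{B}(M)=Max(\mathcal{I})$. Since $\emptyset\in\mathcal{I}$ and $E$ is finite, $\mathcal{I}$ has a maximal member, so (B1) holds. For (B2), given $B_{1},B_{2}\in\mathcal{B}$ and $x\in B_{1}-B_{2}$, the set $B_{1}-\{x\}$ is independent by (I2) and, by Proposition~\ref{proposition100}, has fewer elements than $B_{2}$; so (I3) yields $y\in B_{2}-(B_{1}-\{x\})$ with $(B_{1}-\{x\})\cup\{y\}\in\mathcal{I}$. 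As $x\notin B_{2}$ but $y\in B_{2}$ we have $y\neq x$, hence $y\in B_{2}-B_{1}$, and $(B_{1}-\{x\})\cup\{y\}$ has the common base cardinality, so it is maximal in $\mathcal{I}$, i.e.\ a base; this is exactly (B2).

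For the converse, assume $\mathcal{B}$ satisfies (B1) and (B2). I would first record the lemma that all members of $\mathcal{B}$ are equicardinal: otherwise choose $B_{1},B_{2}\in\mathcal{B}$ with $|B_{1}|>|B_{2}|$ and $|B_{1}-B_{2}|$ minimum, pick $x\in B_{1}-B_{2}$, and use (B2) to get $y\in B_{2}-B_{1}$ with $(B_{1}-\{x\})\cup\{y\}\in\mathcal{B}$; this new base still has more elements than $B_{2}$ but its difference from $B_{2}$ has dropped by one, a contradiction. Now set $\mathcal{I}=Low(\mathcal{B})$ and $M=(E,\mathcal{I})$. Then (I1) holds because $\emptyset$ is contained in any member of $\mathcal{B}$ (nonempty by (B1)), and (I2) is immediate from the definition of $Low$.

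The real content is (I3), which I would prove by contradiction. Suppose $I_{1},I_{2}\in\mathcal{I}$ with $|I_{1}|<|I_{2}|$ but $I_{1}\cup\{e\}\notin\mathcal{I}$ for every $e\in I_{2}-I_{1}$. Choose bases $B_{1}\supseteq I_{1}$ and $B_{2}\supseteq I_{2}$ with $|B_{1}-B_{2}|$ minimum, say equal to $k$. The failure hypothesis forces $(I_{2}-I_{1})\cap B_{1}=\emptyset$, since any such common element $e$ would give $I_{1}\cup\{e\}\subseteq B_{1}\in\mathcal{B}$. If some $x\in(B_{1}-B_{2})-I_{1}$ existed, applying (B2) to $B_{1},B_{2},x$ would produce a base still containing $I_{1}$ with difference $k-1$ from $B_{2}$, contradicting minimality; so $B_{1}-B_{2}\subseteq I_{1}$, and the symmetric argument (applying (B2) to $B_{2},B_{1}$) gives $B_{2}-B_{1}\subseteq I_{2}$. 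Combining these containments with $I_{1}\subseteq B_{1}$, $I_{2}\subseteq B_{2}$ and $(I_{2}-I_{1})\cap B_{1}=\emptyset$, one obtains $B_{2}-B_{1}=I_{2}-I_{1}$ (hence $|I_{2}-I_{1}|=k$) and $B_{1}-B_{2}\subseteq I_{1}-I_{2}$ (hence $|I_{1}-I_{2}|\geq k$). But $|I_{1}|<|I_{2}|$ says exactly $|I_{1}-I_{2}|<|I_{2}-I_{1}|=k$, the desired contradiction. Thus $M$ is a matroid.

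It remains to see $\mathcal{B}(M)=Max(\mathcal{I})=\mathcal{B}$. Each $B\in\mathcal{B}$ lies in $\mathcal{I}=Low(\mathcal{B})$; if it were properly contained in some $I\in\mathcal{I}$, then $I$ would sit inside some $B'\in\mathcal{B}$ with $|B'|>|B|$, contradicting the equicardinality lemma, so $B\in Max(\mathcal{I})$. Conversely, any $B\in Max(\mathcal{I})$ lies in $Low(\mathcal{B})$, hence inside some $B'\in\mathcal{B}\subseteq\mathcal{I}$, and maximality forces $B=B'$. I expect the equicardinality lemma and, above all, the extremal-pair argument in (I3) — specifically, isolating $|B_{1}-B_{2}|$ as the right quantity to minimize so that (B2) can be invoked on both sides without ever deleting an element of $I_{1}$ or $I_{2}$ — to be the only steps requiring genuine care; the rest is bookkeeping.
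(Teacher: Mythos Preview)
Your proof is correct; this is the standard textbook argument for the base axiomatization of matroids. Note, however, that the paper does not actually prove Theorem~\ref{theorem4}: it is stated there as a known result cited from~\cite{Lai01Matroid}, so there is no proof in the paper to compare against. Your argument --- the forward direction via (I3) and Proposition~\ref{proposition100}, the equicardinality lemma via a minimal-difference counterexample, and the verification of (I3) for $Low(\mathcal{B})$ by choosing bases $B_{1}\supseteq I_{1}$, $B_{2}\supseteq I_{2}$ minimizing $|B_{1}-B_{2}|$ --- is exactly the proof one finds in standard references such as Oxley or Lai, and every step checks out.
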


\begin{theorem}(~\cite{Lai01Matroid})
\label{theorem123}
Let $M(E,\mathcal{I})$ be a matroid, $B_{1},B_{2}\in\mathcal{B}(M)$ and $x\in B_{1}-B_{2}$. Then there exists $y\in B_{2}-B_{1}$ such that $(B_{2}-\{y\})\cup\{x\}\in\mathcal{B}$.
\end{theorem}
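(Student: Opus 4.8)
The plan is to build the required base by an augmentation argument, using only the independent-set axioms of Definition~\ref{definition2} together with Proposition~\ref{proposition100}; the crucial idea is to seed the construction with $B_{1}\cap B_{2}$ rather than with $\{x\}$ alone, so that the element we are ultimately forced to discard is guaranteed to lie outside $B_{1}$. (An alternative route is to argue from the exchange property (B2) of Theorem~\ref{theorem4} via the same-seeded minimality argument, which yields the same conclusion.)

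First I would observe that $I_{0}:=(B_{1}\cap B_{2})\cup\{x\}$ is a subset of $B_{1}$, hence independent by (I2), and that $|I_{0}|=|B_{1}\cap B_{2}|+1\le r(M)=|B_{2}|$, the inequality holding because $x\in B_{1}-B_{2}$ forces $B_{1}\cap B_{2}$ to be a proper subset of $B_{1}$. Then, as long as the current independent set has fewer than $r(M)=|B_{2}|$ elements, I would apply (I3) to that set and $B_{2}$ to adjoin a new element of $B_{2}$; iterating, this produces an independent set $B^{*}$ with $|B^{*}|=r(M)$, with $I_{0}\subseteq B^{*}$, and, since every adjoined element came from $B_{2}$, with $B^{*}\subseteq B_{2}\cup\{x\}$. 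An independent set of cardinality $r(M)$ is maximal (a strictly larger independent set would exceed the common base cardinality), so $B^{*}\in\mathcal{B}(M)$.

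Next I would pin down the shape of $B^{*}$. Since $B^{*}\subseteq B_{2}\cup\{x\}$ and $|B^{*}|=r(M)=|B_{2}\cup\{x\}|-1$, there is a unique $y$ with $B^{*}=(B_{2}\cup\{x\})-\{y\}$. Because $x\in I_{0}\subseteq B^{*}$ we get $y\ne x$, hence $y\in B_{2}$; and because $B_{1}\cap B_{2}\subseteq I_{0}\subseteq B^{*}$ we get $y\notin B_{1}\cap B_{2}$, hence $y\notin B_{1}$. Thus $y\in B_{2}-B_{1}$ and $B^{*}=(B_{2}-\{y\})\cup\{x\}\in\mathcal{B}(M)$, which is exactly the assertion.

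The only points needing care, which I would spell out, are the degenerate case $|B_{1}\cap B_{2}|+1=r(M)$, in which no augmentation step is performed and one simply takes $B^{*}=I_{0}$, and the small lemma that any independent set of cardinality $r(M)$ is a base. The genuinely load-bearing step — and the place where a naive proof that grows a base from $\{x\}$ alone would fail to guarantee $y\notin B_{1}$ — is the choice of the seed $I_{0}=(B_{1}\cap B_{2})\cup\{x\}$.
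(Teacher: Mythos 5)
Your proof is correct. Note that the paper itself offers no proof of this statement: Theorem~\ref{theorem123} is quoted from~\cite{Lai01Matroid} as a known result, so there is nothing internal to compare against. Your argument is a sound, self-contained derivation from (I1)--(I3) and Proposition~\ref{proposition100}: seeding the augmentation with $I_{0}=(B_{1}\cap B_{2})\cup\{x\}$ is exactly the right move, since it forces the eventually discarded element $y$ out of $B_{1}\cap B_{2}$ and hence out of $B_{1}$, which is the point a naive growth from $\{x\}$ alone would miss. The supporting details check out: $x\notin B_{2}$ gives $x\notin B_{1}\cap B_{2}$, so $|I_{0}|=|B_{1}\cap B_{2}|+1\le|B_{1}|=r(M)$; each (I3) step adjoins an element of $B_{2}$, so the terminal set $B^{*}$ satisfies $I_{0}\subseteq B^{*}\subseteq B_{2}\cup\{x\}$ with $|B^{*}|=r(M)$; and the "small lemma" that an independent set of cardinality $r(M)$ is a base follows from (I3) applied against any base. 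The counting $|B_{2}\cup\{x\}|=r(M)+1$ then pins down a unique $y$, and your membership checks $y\ne x$ and $y\notin B_{1}\cap B_{2}$ give $y\in B_{2}-B_{1}$ as required. The degenerate case $|I_{0}|=r(M)$ is handled. This is essentially the textbook proof of the co-exchange (as opposed to the (B2) exchange) property, and it would be a legitimate addition if the authors wished to make the paper self-contained.
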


For $Com(\mathcal{B}(M))$, we have the following theorem.

\begin{theorem}(~\cite{Lai01Matroid})
\label{theorem7}
Let $M(E,\mathcal{I})$ be a matroid. Then $(E,Low(Com(\mathcal{B}(M))))$ is a matroid.
\end{theorem}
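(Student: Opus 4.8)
The plan is to recognize $Low(Com(\mathcal{B}(M)))$ as the independent set family of the matroid dual to $M$, so that the proof reduces to verifying that the family $\mathcal{B}^{*}:=Com(\mathcal{B}(M))$ of complements of bases of $M$ satisfies the base axioms (B1) and (B2) of Theorem~\ref{theorem4}, and then identifying $Low(\mathcal{B}^{*})$ with the independent sets of the resulting matroid.

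Axiom (B1) is immediate: since $\emptyset\in\mathcal{I}$ and $E$ is finite, $\mathcal{B}(M)=Max(\mathcal{I})\neq\emptyset$, so choosing any $B\in\mathcal{B}(M)$ gives $E-B\in\mathcal{B}^{*}$. For (B2), let $B_{1}^{*},B_{2}^{*}\in\mathcal{B}^{*}$ and $x\in B_{1}^{*}-B_{2}^{*}$, and put $B_{1}=E-B_{1}^{*}$ and $B_{2}=E-B_{2}^{*}$, both in $\mathcal{B}(M)$. Unwinding complements, $x\in B_{1}^{*}-B_{2}^{*}$ says $x\notin B_{1}$ and $x\in B_{2}$, i.e. $x\in B_{2}-B_{1}$. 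Applying Theorem~\ref{theorem123} to the bases $B_{2},B_{1}$ and the element $x\in B_{2}-B_{1}$ produces $y\in B_{1}-B_{2}$ with $(B_{1}-\{y\})\cup\{x\}\in\mathcal{B}(M)$. Since $y\in B_{1}$ and $x\notin B_{1}$ force $y\neq x$, a De Morgan computation gives
\[
E-\big((B_{1}^{*}-\{x\})\cup\{y\}\big)=(B_{1}\cup\{x\})-\{y\}=(B_{1}-\{y\})\cup\{x\}\in\mathcal{B}(M),
\]
so $(B_{1}^{*}-\{x\})\cup\{y\}\in\mathcal{B}^{*}$; and $y\in B_{1}-B_{2}$ translates to $y\in B_{2}^{*}-B_{1}^{*}$, which is exactly what (B2) demands.

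By Theorem~\ref{theorem4} there is then a matroid $M^{*}(E,\mathcal{I}^{*})$ with $\mathcal{B}(M^{*})=\mathcal{B}^{*}$. It remains to check $\mathcal{I}^{*}=Low(\mathcal{B}^{*})$: the inclusion $Low(\mathcal{B}^{*})\subseteq\mathcal{I}^{*}$ is immediate from (I2) applied to $M^{*}$, while for the reverse inclusion any $I\in\mathcal{I}^{*}$ is contained, by finiteness of $E$, in some maximal member of $\mathcal{I}^{*}$, i.e. in a base of $M^{*}$, hence $I\in Low(\mathcal{B}(M^{*}))=Low(\mathcal{B}^{*})$. Consequently $(E,Low(Com(\mathcal{B}(M))))=(E,\mathcal{I}^{*})$ is a matroid.

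The only delicate point is the complement bookkeeping in the (B2) step, and in particular orienting the exchange given by Theorem~\ref{theorem123} so that it matches the one required for $\mathcal{B}^{*}$; the rest is a routine unwinding of the definitions of $Com$, $Max$ and $Low$.
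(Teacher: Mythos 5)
Your proof is correct. The paper itself gives no argument for this theorem --- it is quoted from Lai's textbook with a citation and no proof --- so there is nothing internal to compare against; your route (verify (B1) and (B2) of Theorem~\ref{theorem4} for $Com(\mathcal{B}(M))$ via the exchange property of Theorem~\ref{theorem123}, then identify $Low$ of the resulting base family with the independent sets) is the standard duality argument, and the delicate points you flag --- the orientation of the exchange, the fact that $y\neq x$, and the final identification $\mathcal{I}^{*}=Low(\mathcal{B}^{*})$ using finiteness --- are all handled correctly.
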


According to the above theorem, we introduce the concept of dual matroid.

\begin{definition}(Dual matroid~\cite{Lai01Matroid})
\label{definition8}
Let $M(E,\mathcal{I})$ be a matroid.
$(E,Low(Com(\mathcal{B}(M))))$ is denoted as $M^{*}$ and called the dual matroid of $M$.
\end{definition}

Sometimes, we denote $\mathcal{I}(M^{*})$ and $\mathcal{B}(M^{*})$ as $\mathcal{I}^{*}(M)$ and $\mathcal{B}^{*}(M)$, respectively.

\section{Unique expansion matroid}
\label{Secondary base unique expansion matroid}

In this section, we propose the concepts of secondary base, forming base family, unique expansion matroid and unique partition matroid. Then we study their properties and the relationship between them.

\subsection{Secondary base and forming base family}

Secondary bases are a type of independent sets. We give its definition as follows.

\begin{definition}(Secondary base)
\label{definition43}
Let $M$ be a matroid and $r(M)>0$. $\{A\in\mathcal{I}(M)||A|=r(M)-1\}$ is denoted as $s(M)$ and called the  secondary base family of $M$. Any $A\in s(M)$ is called a secondary base of $M$.
\end{definition}

We propose an operator on matroids.

\begin{definition}
\label{definition44}
Let $M(E,\mathcal{I})$ be a matroid. An operator $K_{M}:2^{E}\rightarrow2^{E}$ is defined by:
for any $X\subseteq E$, $K_{M}(X)=\{a\in E|r(X\cup\{a\})=r(X)+1\}$.
\end{definition}

It is obvious $X\cap K_{M}(X)=\emptyset$. By the above definition, we give the concept of forming base family.

\begin{definition}(Forming base family)
\label{definition77}
Let $M$ be a matroid and $r(M)>0$. The forming base family of $M$ is defined by: $F(M)=\{K_{M}(X)|X\in s(M)\}$.
\end{definition}

Given a base of a matroid $M$, we obtain a subset of $F(M)$.

\begin{definition}
\label{definition340}
Let $M$ be a matroid, $r(M)>0$ and $B\in\mathcal{B}(M)$. The forming base family of $M$ with respect of $B$ is defined by: $F_{M}(B)=\{K_{M}(X)|X\in s(M)\wedge X\subseteq B\}$.
\end{definition}

It is obvious $F_{M}(B)\subseteq F(M)$. The forming base family with respect of a base has the following property.

\begin{proposition}
\label{proposition341}
$|F_{M}(B)|=r(M)$.
\end{proposition}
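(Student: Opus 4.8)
The plan is to show that the assignment $b \mapsto K_M(B-\{b\})$ is a bijection from $B$ onto $F_M(B)$; since $|B| = r(M)$ by Definitions~\ref{definition3} and~\ref{definition101}, this gives $|F_M(B)| = r(M)$ at once.

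First I would identify exactly which sets $X \in s(M)$ satisfy $X \subseteq B$. Because $B$ is a base it is independent, so by (I2) every subset of $B$ is independent; in particular any $X \subseteq B$ with $|X| = r(M)-1$ automatically lies in $s(M)$, and conversely any $X \in s(M)$ with $X \subseteq B$ has $|X| = r(M)-1 = |B|-1$. Hence the sets $X \in s(M)$ with $X \subseteq B$ are precisely the sets $B - \{b\}$ for $b \in B$, so $F_M(B) = \{K_M(B-\{b\}) \mid b \in B\}$ and already $|F_M(B)| \le |B| = r(M)$.

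It remains to prove the map $b \mapsto K_M(B-\{b\})$ is injective, and the key observation is that $b \in K_M(B-\{b\})$ for every $b \in B$. Indeed, $B - \{b\}$ is independent of cardinality $r(M)-1$, so $r(B-\{b\}) = r(M)-1$, while $(B-\{b\}) \cup \{b\} = B$ has rank $r(M)$; thus $r\big((B-\{b\})\cup\{b\}\big) = r(B-\{b\}) + 1$, i.e. $b \in K_M(B-\{b\})$. Now suppose $b_1, b_2 \in B$ with $b_1 \neq b_2$ and $K_M(B-\{b_1\}) = K_M(B-\{b_2\})$. Then $b_1 \in K_M(B-\{b_1\}) = K_M(B-\{b_2\})$; but $b_1 \in B - \{b_2\}$, which contradicts the remark (noted right after Definition~\ref{definition44}) that $X \cap K_M(X) = \emptyset$ for every $X \subseteq E$. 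Hence the map is injective, and together with the previous paragraph $|F_M(B)| = r(M)$.

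I do not expect a genuine obstacle here: the argument only invokes (I2), the fact that the rank of an independent set equals its cardinality, the fact that a base has rank $r(M)$, and the disjointness $X \cap K_M(X) = \emptyset$. The one point worth stating carefully is the reindexing in the second paragraph — that the sets $X \in s(M)$ with $X \subseteq B$ are exactly $\{B-\{b\} : b \in B\}$ — since everything else follows immediately from it.
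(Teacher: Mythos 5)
Your proof is correct and follows essentially the same route as the paper's: both arguments reduce to showing that the sets $K_{M}(B-\{b\})$ for $b\in B$ are pairwise distinct, the paper via the observation $b_{i}\in K_{M}(B-\{b_{i}\})-K_{M}(B-\{b_{j}\})$ and you via the equivalent pair of facts $b\in K_{M}(B-\{b\})$ and $X\cap K_{M}(X)=\emptyset$. Your version merely makes explicit the identification $F_{M}(B)=\{K_{M}(B-\{b\})\mid b\in B\}$ and avoids the paper's separate $r(M)=1$ case, which is a harmless streamlining.
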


For proving the above proposition, we firstly prove the following simple lemma.

\begin{lemma}
\label{lemma66}
If $r(M)=1$, $F_{M}(B)=\{\cup\mathcal{B}(M)\}$.
\end{lemma}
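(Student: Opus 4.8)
When $r(M) = 1$, the bases of $M$ are exactly the singletons $\{e\}$ with $e \in \cup \mathcal{B}(M)$, since by Proposition~\ref{proposition100} every base has cardinality $1$, and by (I2) every element lying in some base is itself a base. The unique secondary base is $\emptyset$, because $s(M) = \{A \in \mathcal{I}(M) \mid |A| = r(M) - 1 = 0\}$ and the only set of cardinality $0$ is $\emptyset$ (which is independent by (I1)). Moreover $\emptyset \subseteq B$ for every base $B$, so the subset condition in Definition~\ref{definition340} is vacuous. Hence $F_M(B) = \{K_M(\emptyset)\}$, a family with exactly one member.

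So the plan is to compute $K_M(\emptyset)$ and show it equals $\cup \mathcal{B}(M)$. First I would note that $r(\emptyset) = 0$: the rank of a set is the size of a largest independent subset, and $\emptyset$ has no proper subsets, so its rank is $0$ (alternatively, $r(\emptyset)$ cannot be $r(M) = 1$ since $\emptyset$ is not a base). Then by Definition~\ref{definition44}, $K_M(\emptyset) = \{a \in E \mid r(\{a\}) = r(\emptyset) + 1 = 1\}$. An element $a$ satisfies $r(\{a\}) = 1$ precisely when $\{a\}$ is independent, i.e.\ when $\{a\} \in \mathcal{I}(M)$; and since $r(M) = 1$, an independent singleton is a base, so this is equivalent to $\{a\} \in \mathcal{B}(M)$, i.e.\ to $a \in \cup \mathcal{B}(M)$. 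Therefore $K_M(\emptyset) = \cup \mathcal{B}(M)$, giving $F_M(B) = \{\cup \mathcal{B}(M)\}$.

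I expect essentially no real obstacle here — the lemma is a degenerate base case whose proof is bookkeeping with the definitions. The only point requiring a moment's care is justifying $r(\emptyset) = 0$ and the identification of rank-one singletons with bases, both of which follow immediately from Proposition~\ref{proposition100}, (I1) and (I2); I would state these explicitly rather than leave them implicit, since the rest of the section's arguments will build on exactly this kind of elementary manipulation of $K_M$ and $s(M)$.
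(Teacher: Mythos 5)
Your proof is correct and follows exactly the same route as the paper's: since $r(M)=1$ forces $s(M)=\{\emptyset\}$, one gets $F_M(B)=\{K_M(\emptyset)\}$, and $K_M(\emptyset)=\cup\mathcal{B}(M)$. The paper compresses this into a single chain of equalities, whereas you spell out the justification of each step; the content is identical.
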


\begin{proof}
By $r(M)=1$, we have that
$F_{M}(B)=\{K_{M}(X)|X\in s(M)\wedge X\subseteq B\}=\{K_{M}(\emptyset)\}=\{\cup\mathcal{B}(M)\}$.
$\Box$
\end{proof}

The proof of Proposition~\ref{proposition341} is presented as follows.

\begin{proof}
If $r(M)=1$, by Lemma~\ref{lemma66}, this proposition follows. If $r(M)>1$, we let $B=\{b_{1},b_{2},\cdots,b_{t}\}$, where $t>1$. For any $1\leq i<j\leq t$, we have that $b_{i}\in K_{M}(B-\{b_{i}\})-K_{M}(B-\{b_{j}\})$. Thus $K_{M}(B-\{b_{i}\})\neq K_{M}(B-\{b_{j}\})$. Then $|F_{M}(B)|=|\{K_{M}(B-\{b_{1}\}),K_{M}(B-\{b_{2}\}),\cdots,K_{M}(B-\{b_{t}\})\}|=r(M)$. $\Box$

\end{proof}

By Proposition~\ref{proposition341}, we have the following corollary.

\begin{corollary}
\label{corollary423}
$|F(M)|\geq r(M)$.
\end{corollary}

\begin{proof}

It follows form $F_{M}(B)\subseteq F(M)$ and Proposition~\ref{proposition341}. $\Box$

\end{proof}

The cardinality of the forming base family of a matroid can be greater than the rank of the matroid, as well as equal to the rank of the matroid. To illustrate this, let us see an example.

\begin{example}
\label{example300}
Let $E=\{1,2,3\}$, $\mathcal{B}_{1}=\{\{1,2\},\{1,3\}\}$ and $M_{1}=(E,Low(\mathcal{B}_{1}))$. Then $M_{1}$ is a matroid and $Q_{M_{1}}=\{\{1\},\{2,3\}\}$. So $|Q_{M_{1}}|=2=r(M_{1})$. Let $\mathcal{B}_{2}=\{\{1,2\},\{1,3
\},\{2,
3\}\}$ and $M_{2}=(E,Low(\mathcal{B}_{2}))$. Then $M_{2}$ is a matroid and $Q_{M_{2}}=\{\{1,2\},\{1,3\},\{2,3\}\}$. So $|Q_{M_{2}}|=3>2=r(M_{2})$.
\end{example}

Based on this, it is a natural issue that under what conditions the cardinality of the forming base family of a matroid is equal to the rank of the matroid. With the discussion getting further, we will give a necessary and sufficient condition for this issue. Now we continue to discuss the properties of forming base families.

\begin{proposition}
\label{proposition46}
$\cup F(M)=\cup\mathcal{B}(M)$.
\end{proposition}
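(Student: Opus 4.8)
The plan is to prove the two inclusions separately. For $\cup F(M) \subseteq \cup \mathcal{B}(M)$: take any element $a \in \cup F(M)$, so $a \in K_M(X)$ for some secondary base $X \in s(M)$. By definition of $K_M$, this means $r(X \cup \{a\}) = r(X) + 1 = (r(M)-1) + 1 = r(M)$, and since $X \cup \{a\}$ has cardinality $r(M)$ and rank $r(M)$ it is a base. Hence $a \in X \cup \{a\} \subseteq \cup \mathcal{B}(M)$. This direction is essentially immediate from the definitions of $s(M)$, $K_M$, and $\mathcal{B}(M)$.

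For the reverse inclusion $\cup \mathcal{B}(M) \subseteq \cup F(M)$: take any $b \in \cup \mathcal{B}(M)$, so $b \in B$ for some base $B \in \mathcal{B}(M)$. I want to exhibit a secondary base $X$ with $b \in K_M(X)$. The natural candidate is $X = B - \{b\}$. One checks $X$ is independent (by (I2)) with $|X| = r(M)-1$, so $X \in s(M)$; and $r(X \cup \{b\}) = r(B) = r(M) = r(X)+1$, so $b \in K_M(X) \in F(M)$, giving $b \in \cup F(M)$. This handles the case $r(M) > 0$ with $b$ lying in some base.

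The only subtlety is the degenerate behaviour when $r(M) = 1$: then $s(M) = \{\emptyset\}$, $F(M) = \{K_M(\emptyset)\} = \{\cup \mathcal{B}(M)\}$ (as in Lemma~\ref{lemma66}), and the identity $\cup F(M) = \cup \mathcal{B}(M)$ holds trivially; the argument above also goes through verbatim with $X = \emptyset$. I should also note that Proposition~\ref{proposition46} is implicitly stated for matroids with $r(M) > 0$, since $F(M)$ is only defined in that case. I do not anticipate a genuine obstacle here — both inclusions reduce to unwinding Definitions~\ref{definition44} and \ref{definition77} together with the fact that a rank-$r(M)$ independent set of size $r(M)$ is a base (and conversely); the write-up is short. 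If one wants to avoid any case split at all, the cleanest phrasing is: for every $b \in E$, $b \in \cup F(M)$ iff there is a base containing $b$ iff $b \in \cup \mathcal{B}(M)$, where the first equivalence is witnessed in both directions by the pair $(X, X \cup \{b\})$ with $X$ a secondary base and $X \cup \{b\}$ a base.
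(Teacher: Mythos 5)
Your proof is correct and follows essentially the same route as the paper's: the forward inclusion by noting $X\cup\{a\}$ is a base whenever $a\in K_M(X)$ for a secondary base $X$, and the reverse inclusion by taking $X=B-\{b\}$ as the witnessing secondary base. The extra remarks on the rank computation and the $r(M)=1$ case are fine but not needed beyond what the paper records.
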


\begin{proof}

For any $g\in\cup F(M)$, we know that there exists some $A\in s(M)$ such that $g\in K_{M}(A)$. Hence $\{g\}\cup A\in\mathcal{B}(M)$. Thus $g\in\cup\mathcal{B}(M)$, hence $\cup F(M)\subseteq\cup\mathcal{B}(M)$. For any $h\in\cup\mathcal{B}(M)$, we know that there exists some $B\in\mathcal{B}(M)$ such that $h\in B$. Let $A=B-\{h\}$. It is obvious $A\in s(M)$ and $h\in K_{M}(A)$. Thus $h\in\cup F(M)$, therefore $\cup\mathcal{B}(M)\subseteq\cup F(M)$. So $\cup F(M)=\cup\mathcal{B}(M)$. $\Box$

\end{proof}

To illustrate the above proposition, let us see an example.

\begin{example}
\label{example47}
Let $E=\{1,2,3\}$, $\mathcal{B}=\{\{1,2\},\{1,3\},\{2,3\}\}$ and $M=(E,Low(\mathcal{B}))$. Then $M$ is a matroid and $F(M)=\{\{1,2\},\{1,3\},\{2,3\}\}$. $\cup F(M)$ is a covering on $\cup\mathcal{B}(M)$, not a partition.
\end{example}

Based on this, it is a natural issue that under what conditions a forming base family is a partition. To address this issue, we need to propose the concept of unique expansion matroid. On the other hand, there is a more general result than that in Proposition~\ref{proposition46}.

\begin{proposition}
\label{proposition124}
For any $B\in\mathcal{B}(M)$, $\cup F_{M}(B)=\cup\mathcal{B}(M)$.
\end{proposition}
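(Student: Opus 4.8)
\textbf{Proof proposal for Proposition~\ref{proposition124}.}

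The plan is to prove the two inclusions $\cup F_{M}(B)\subseteq\cup\mathcal{B}(M)$ and $\cup\mathcal{B}(M)\subseteq\cup F_{M}(B)$ separately, mirroring the structure of the proof of Proposition~\ref{proposition46} but being careful that now the secondary bases $X$ are restricted to subsets of the fixed base $B$. The first inclusion is the easy direction: if $g\in\cup F_{M}(B)$, then $g\in K_{M}(X)$ for some $X\in s(M)$ with $X\subseteq B$, so $X\cup\{g\}$ is independent of size $r(M)$, hence $X\cup\{g\}\in\mathcal{B}(M)$ and therefore $g\in\cup\mathcal{B}(M)$. This requires no use of the fact that $X\subseteq B$ at all, and in fact also follows from $F_{M}(B)\subseteq F(M)$ together with Proposition~\ref{proposition46}.

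For the reverse inclusion I would argue as follows. Take $h\in\cup\mathcal{B}(M)$; I must produce a secondary base $X\subseteq B$ with $h\in K_{M}(X)$. If $h\in B$, set $X=B-\{h\}$: then $X\in s(M)$, $X\subseteq B$, and clearly $r(X\cup\{h\})=r(B)=r(M)=r(X)+1$, so $h\in K_{M}(X)\in F_{M}(B)$. The substantive case is $h\notin B$ but $h\in\cup\mathcal{B}(M)$, so $h$ lies in some base $B'$. Here I invoke the base-exchange property (Theorem~\ref{theorem4}, property (B2), or equivalently Theorem~\ref{theorem123}): since $h\in B'-B$, there exists $y\in B-B'$ such that $(B-\{y\})\cup\{h\}\in\mathcal{B}(M)$. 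Now set $X=B-\{y\}$. Then $X\subseteq B$, $|X|=r(M)-1$, so $X\in s(M)$; and $X\cup\{h\}=(B-\{y\})\cup\{h\}$ is a base, so $r(X\cup\{h\})=r(M)=|X|+1=r(X)+1$, which gives $h\in K_{M}(X)$ and hence $h\in\cup F_{M}(B)$. Combining both cases yields $\cup\mathcal{B}(M)\subseteq\cup F_{M}(B)$, and with the first inclusion the equality follows.

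The main obstacle, such as it is, is the case $h\notin B$: one needs the right form of the exchange axiom to move $h$ into the fixed base $B$ while removing a single element, and Theorem~\ref{theorem123} (with $B_1=B'$, $B_2=B$, and the roles of the exchanged elements read off correctly) is precisely what is needed. One should double-check the degenerate situation $r(M)=1$: then $s(M)=\{\emptyset\}$, $F_{M}(B)=\{K_{M}(\emptyset)\}=\{\cup\mathcal{B}(M)\}$ by Lemma~\ref{lemma66}, so $\cup F_{M}(B)=\cup\mathcal{B}(M)$ holds directly, and the case $r(M)=0$ is excluded since $F_{M}(B)$ is only defined when $r(M)>0$. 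Thus the proposition holds in all cases.
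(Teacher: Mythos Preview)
Your proof is correct and follows essentially the same route as the paper's: the paper also obtains the first inclusion from $F_{M}(B)\subseteq F(M)$ together with Proposition~\ref{proposition46}, handles the case $h\in B$ via Lemma~\ref{lemmaE} (which is exactly your observation that $B-\{h\}$ works), and for $h\in\cup\mathcal{B}(M)-B$ applies Theorem~\ref{theorem123} in the same way you do to find $y\in B$ with $(B-\{y\})\cup\{h\}\in\mathcal{B}(M)$.
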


For proving the above proposition, we firstly prove the following lemma, which indicates that for any element $b\in B$, there exists only one element $K\in F_{M}(B)$ such that $b\in K$.

\begin{lemma}
\label{lemmaE}
Let $b\in B\in\mathcal{B}(M)$. Then $|\{K\in F_{M}(B)|b\in K\}|=1$.
\end{lemma}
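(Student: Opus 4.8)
The plan is to fix $b \in B$ and show that exactly one member of $F_M(B)$ contains $b$, by exhibiting one such member and then proving no other can contain $b$. Recall that $F_M(B) = \{K_M(X) \mid X \in s(M),\ X \subseteq B\}$, and every secondary base $X \subseteq B$ has $|X| = r(M)-1$, so $X = B - \{c\}$ for a unique $c \in B$; conversely each $c \in B$ yields such an $X$. Thus the members of $F_M(B)$ are precisely the sets $K_M(B - \{c\})$ for $c \in B$ (possibly with repetitions, though the proof of Proposition~\ref{proposition341} shows there are none — but I will not even need distinctness here).

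For the existence half: take $X = B - \{b\}$. This is a secondary base contained in $B$, and since $X \cup \{b\} = B \in \mathcal{B}(M)$ we have $r(X \cup \{b\}) = r(M) = (r(M)-1) + 1 = r(X) + 1$, so $b \in K_M(B - \{b\}) \in F_M(B)$. For the uniqueness half: suppose $K_M(B - \{c\})$ contains $b$ for some $c \in B$ with $c \neq b$. Then $b \notin B - \{c\}$ is required, which is automatic, but more importantly $b \in K_M(B-\{c\})$ means $(B - \{c\}) \cup \{b\} \in \mathcal{B}(M)$; however $b \in B - \{c\}$ already when $c \neq b$, so $(B-\{c\}) \cup \{b\} = B - \{c\}$, which has cardinality $r(M)-1$ and hence cannot be a base. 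This contradiction forces $c = b$, so $K_M(B - \{b\})$ is the only member of $F_M(B)$ containing $b$.

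The main subtlety — really the only place to be careful — is the bookkeeping that $K_M(X) = K_M(B - \{c\})$ ranges over exactly the sets indexed by $c \in B$, i.e. that the secondary bases inside $B$ are precisely the $(r(M)-1)$-subsets of $B$, which is immediate from $|B| = r(M)$ and property (I2). Once that is in hand, the argument reduces to the observation that $b \in K_M(X)$ forces $b \notin X$ (noted right after Definition~\ref{definition44}), which among the candidate $X$'s pins down $X = B - \{b\}$. I would present the existence computation first, then the uniqueness contradiction, and close. $\Box$
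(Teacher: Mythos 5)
Your proof is correct and follows essentially the same route as the paper's: you identify the members of $F_{M}(B)$ as the sets $K_{M}(B-\{c\})$ for $c\in B$, observe that $b\in K_{M}(B-\{b\})$, and rule out $b\in K_{M}(B-\{c\})$ for $c\neq b$ via the fact that $X\cap K_{M}(X)=\emptyset$. The only difference is cosmetic: the paper treats $r(M)=1$ as a separate case through Lemma~\ref{lemma66}, whereas your argument handles it uniformly.
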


\begin{proof}
If $r(M)=1$, by Lemma~\ref{lemma66}, this proposition follows. If $r(M)>1$, we let  $B=\{b,b_{2},\cdots,b_{t}\}$, where $t>1$. It is obvious  $F_{M}(B)=\{K_{M}(B-\{b\}),K_{M}(B-\{b_{2}\}),\cdots,K_{M}(B-\{b_{t}\})\}$ and $b\in K_{M}(B-\{b\})$. For any $2\leq j\leq t$, we have that $b\notin K_{M}(B-\{b_{j}\})$. Thus $|\{K\in F_{M}(B)|b\in K\}|=1$. $\Box$

\end{proof}

The proof of Proposition~\ref{proposition124} is presented as follows.

\begin{proof}
By $F_{M}(B)\subseteq F(M)$ and Proposition~\ref{proposition46}, we have that $\cup F_{M}(B)\subseteq\cup\mathcal{B}(M)$. By Lemma~\ref{lemmaE}, we know that $B\subseteq\cup F_{M}(B)$. For any  $d\in\cup\mathcal{B}(M)-B$, we know that there exists some $D\in\mathcal{B}(M)$ such that $d\in D$. Thus $d\in D-B$. By Theorem~\ref{theorem123}, we know that there exists some $b\in B-D$ such that $(B-\{b\})\cup\{d\}\in\mathcal{B}(M)$. Hence $d\in K_{M}(B-\{b\})$. Since $K_{M}(B-\{b\})\in F_{M}(B)$,  $d\in\cup F_{M}(B)$. Hence $\cup\mathcal{B}(M)\subseteq\cup F_{M}(B)$. Therefore $\cup F_{M}(B)=\cup\mathcal{B}(M)$. $\Box$

\end{proof}

\subsection{Unique expansion matroid}

By the definition of matroids, for any secondary base and any base, there exists at least one element of the base which does not belong to the secondary base such that the union of the secondary base and this element is a base. Of course, there may exist more than one element of the base which satisfies the conditions. Below is an example.

\begin{example}
\label{example48}
Let $E=\{1,2,3\}$, $\mathcal{B}=\{\{1,2\},\{1,3\},\{2,3\}\}$ and $M=(E,Low(\mathcal{B}))$. Then $M$ is a matroid. Let $I_{2}=\{2,3\}$ and $I_{1}=\{1\}$. It is obvious $2\in I_{2}-I_{1}$, $3\in I_{2}-I_{1}$,  $(I_{1}\cup\{2\})\in\mathcal{I}(M)$ and $(I_{1}\cup\{3\})\in\mathcal{I}(M)$.
\end{example}

For any secondary base and any base, if there exists just one element of the base which satisfies the above conditions, we obtain a special type of matroids.

\begin{definition}(Unique expansion matroid)
\label{definition49}
Let $M$ be a matroid. For any $B\in\mathcal{B}(M)$ and any $A\in s(M)$, if by $e_{1}\in B$, $e_{2}\in B$, $A\cup\{e_{1}\}\in\mathcal{B}(M)$ and $A\cup\{e_{2}\}\in\mathcal{B}(M)$, we obtain that $e_{1}=e_{2}$, $M$ is called a unique expansion matroid.
\end{definition}

To illustrate this concept, let us see an example.

\begin{example}
\label{example73}
Let $E=\{1,2,3\}$, $\mathcal{B}=\{\{1,2\},\{1,3\}\}$ and $M=(E,Low(\mathcal{B}))$. Then $M$ is a unique expansion  matroid.
\end{example}

With the concept of unique expansion matroid, we can answer the question that under what conditions a forming base family is a partition.

\begin{theorem}
\label{theorem50}
$F(M)$ is a partition on $\cup\mathcal{B}(M)$ iff $M$ is a unique expansion matroid.
\end{theorem}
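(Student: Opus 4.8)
The plan is to prove both directions of the biconditional, using the structural facts already established about $F_M(B)$, especially Lemma~\ref{lemmaE} and Proposition~\ref{proposition124}.

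First I would handle the easy direction: if $F(M)$ is a partition on $\cup\mathcal{B}(M)$, then $M$ is a unique expansion matroid. Take $B\in\mathcal{B}(M)$, $A\in s(M)$, and suppose $e_1,e_2\in B$ with $A\cup\{e_1\},A\cup\{e_2\}\in\mathcal{B}(M)$. Then $e_1\in K_M(A)$ and $e_2\in K_M(A)$, and $K_M(A)\in F(M)$ since $A\in s(M)$. Also, since $A\cup\{e_1\}$ is a base containing $e_1$, and $A\cup\{e_1\}\supseteq A$ with $|A|=r(M)-1$, we can identify $K_M(A)$ as an element of $F_M(A\cup\{e_1\})$ as well; but the cleanest argument is simply: $e_1,e_2$ both lie in the single block $K_M(A)$ of the partition $F(M)$. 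To force $e_1=e_2$ I need that $e_1$ (resp. $e_2$) lies in \emph{only} that one block. Here I would use that $A\cup\{e_1\}$ is itself a base, so by Lemma~\ref{lemmaE} applied with $b=e_1$ and base $B'=A\cup\{e_1\}$, the element $e_1$ lies in exactly one member of $F_M(B')\subseteq F(M)$, namely $K_M(A)$; since $F(M)$ is a partition, $K_M(A)$ is the unique block of $F(M)$ containing $e_1$. The same holds for $e_2$. But $e_1,e_2\in K_M(A)$, so both are in that block; if $e_1\neq e_2$ there is still no contradiction yet — I must instead argue that $e_2\in K_M(A)$ already places $e_2$ in $B$'s relevant block. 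Actually the correct route: consider the base $B$ and apply Lemma~\ref{lemmaE} inside $F_M(B)$. Both $e_1,e_2\in B$, so each lies in exactly one block of $F_M(B)$. The hypothesis $A\cup\{e_i\}\in\mathcal{B}(M)$ with $A\subseteq ?$ — but $A$ need not be a subset of $B$. This is the subtlety. So instead I use: $e_1,e_2\in K_M(A)$ and $K_M(A)$ is a block of the partition $F(M)$; meanwhile, viewing the base $C=A\cup\{e_1\}$, Lemma~\ref{lemmaE} says $e_2$ lies in exactly one block of $F_M(C)$; since $e_2\in C$ forces $e_2\in A$ or $e_2=e_1$; if $e_2\in A$ then $A\cup\{e_2\}=A$ is not a base (wrong cardinality), contradiction; hence $e_2=e_1$. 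That is the clean finish.

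Second, the harder direction: if $M$ is a unique expansion matroid, then $F(M)$ is a partition on $\cup\mathcal{B}(M)$. By Definition~\ref{definition68} I must check three things. (1) $\emptyset\notin F(M)$: each $K_M(A)$ with $A\in s(M)$ is nonempty because $A$ extends to a base, so some $e\in K_M(A)$. (2) $\cup F(M)=\cup\mathcal{B}(M)$: this is exactly Proposition~\ref{proposition46}, already proved. (3) Disjointness: for $K_1=K_M(A_1)$ and $K_2=K_M(A_2)$ with $A_1,A_2\in s(M)$, if $K_1\cap K_2\neq\emptyset$ I must show $K_1=K_2$. Pick $g\in K_1\cap K_2$. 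Then $A_1\cup\{g\}$ and $A_2\cup\{g\}$ are both bases containing $g$. I want to show $A_1=A_2$ (which gives $K_1=K_2$). Suppose not. Then $A_1\cup\{g\}\neq A_2\cup\{g\}$, so there is $x\in A_1\setminus A_2$ (WLOG). Let $B=A_1\cup\{g\}$, a base. Since $A_2\cup\{g\}$ is a base not equal to $B$, and $x\in B\setminus(A_2\cup\{g\})$, by Theorem~\ref{theorem123} there is $y\in(A_2\cup\{g\})\setminus B$ with $(A_2\cup\{g\}\setminus\{y\})\cup\{x\}\in\mathcal{B}(M)$. Now I would massage this to produce a secondary base $A$ and two distinct elements $e_1,e_2\in$ some base with $A\cup\{e_1\},A\cup\{e_2\}$ both bases — contradicting uniqueness. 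A candidate: $A=A_2$, and note $A_2\cup\{g\}$ is a base, so $g\in K_M(A_2)$; I need a second element $e$ of some base with $A_2\cup\{e\}\in\mathcal{B}(M)$ and $e\neq g$, both $e,g$ lying in a common base — the base $A_1\cup\{g\}$ contains $g$ and contains $x$; if I can show $A_2\cup\{x\}\in\mathcal{B}(M)$ then with base $B=A_1\cup\{g\}$, elements $g,x\in B$, secondary base $A_2$, both $A_2\cup\{g\}$ and $A_2\cup\{x\}$ bases, and $g\neq x$ (since $x\in A_1\setminus A_2$ but $g\in K_M(A_2)$ means $g\notin A_2$, and if $g=x$ then $x\in A_1$ fine but also $g\notin A_1$... wait $g\in K_M(A_1)$ so $g\notin A_1$, while $x\in A_1$, hence $g\neq x$) — this yields the contradiction. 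So the crux is: from $x\in A_1\setminus A_2$ and $g\in K_M(A_1)\cap K_M(A_2)$, derive $A_2\cup\{x\}\in\mathcal{B}(M)$.

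The main obstacle is precisely establishing $A_2\cup\{x\}\in\mathcal{B}(M)$ in that last step. I expect to get it from Theorem~\ref{theorem123} with the right choice of bases: apply it to $B_1=A_1\cup\{g\}$ and $B_2=A_2\cup\{g\}$ with the element $x\in B_1\setminus B_2$, obtaining $y\in B_2\setminus B_1$ with $(B_2\setminus\{y\})\cup\{x\}\in\mathcal{B}(M)$; the point is that $g\in B_2\setminus B_1$? No — $g\in B_1$. So $g\notin B_2\setminus B_1$, meaning the element $y$ removed from $B_2$ is \emph{not} $g$, hence $g$ survives: $(B_2\setminus\{y\})\cup\{x\}$ still contains $g$. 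If this base also contained all of $A_2$ except possibly $y\in A_2$... I would then need $y$ to actually be forced to lie in $A_2$ and the resulting base to be handled inductively or by iterating. If a single application does not immediately give $A_2\cup\{x\}$, I would instead induct on $|A_1\setminus A_2|$, using at each stage the unique-expansion hypothesis to pin down the exchanged element, thereby showing $A_1=A_2$ directly. I would keep the argument to a paragraph by invoking Theorem~\ref{theorem123} and Lemma~\ref{lemmaE} rather than re-deriving exchange properties.
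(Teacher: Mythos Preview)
Your proposal has genuine gaps in both directions, and the more serious one is in the $(\Leftarrow)$ direction.

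For $(\Leftarrow)$, you aim to show $A_1=A_2$ whenever $K_M(A_1)\cap K_M(A_2)\neq\emptyset$. This is too strong and in general false: in a unique expansion matroid one can have $K_M(A_1)=K_M(A_2)$ with $A_1\neq A_2$. Take $E=\{1,2,3,4\}$ with $\mathcal{B}(M)=\{\{1,3\},\{1,4\},\{2,3\},\{2,4\}\}$ (so $F(M)=\{\{1,2\},\{3,4\}\}$), and set $A_1=\{1\}$, $A_2=\{2\}$, $g=3$. Then $g\in K_M(A_1)\cap K_M(A_2)=\{3,4\}$ but $A_1\neq A_2$, and your ``crux'' $A_2\cup\{x\}=\{1,2\}\in\mathcal{B}(M)$ fails. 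No induction on $|A_1\setminus A_2|$ will rescue a false target. The paper does not try to identify $A_1$ with $A_2$; instead it proves $K_M(A_1)\subseteq K_M(A_2)$ directly: for each $b\in K_M(A_1)$ it applies Theorem~\ref{theorem123} to the bases $A_1\cup\{b\}$ and $A_2\cup\{a\}$ with the element $b$, then uses the unique-expansion hypothesis (secondary base $A_1$, base $((A_2\cup\{a\})-\{g\})\cup\{b\}$) to force the exchanged element $g$ to equal $a$, yielding $A_2\cup\{b\}\in\mathcal{B}(M)$.

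For $(\Rightarrow)$, your final argument asserts $e_2\in C=A\cup\{e_1\}$, but since $e_2\notin A$ this is exactly the conclusion $e_2=e_1$ you are after; Lemma~\ref{lemmaE} says nothing about elements outside the base. Your earlier idea via $F_M(B)$, which you abandoned, actually works: since $e_1,e_2\in K_M(A)\in F(M)$ and $F(M)$ is a partition, $K_M(A)$ is the unique block of $F(M)$ containing each $e_i$; by Lemma~\ref{lemmaE} the unique block of $F_M(B)\subseteq F(M)$ containing $e_i$ is $K_M(B-\{e_i\})$; hence $K_M(B-\{e_1\})=K_M(A)=K_M(B-\{e_2\})$, and the argument in the proof of Proposition~\ref{proposition341} then forces $e_1=e_2$. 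The paper instead argues this direction by contradiction: from $e_1\neq e_2$ it uses (I3) to extend $\{e_1,e_2\}$ inside $A$ and thereby exhibit two \emph{distinct} blocks $K_M(D\cup\{e_1\})$ and $K_M(D\cup\{e_2\})$ sharing a common element.
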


\begin{proof}

($\Rightarrow$): We use the proof by contradiction. Suppose that $M$ is not a unique expansion matroid. Then there exists some $A\in s(M)$ and some $B\in\mathcal{B}(M)$ such that there exists some $e_{1}\in B$ and some $e_{2}\in B$, where $e_{1}\neq e_{2}$, such that $A\cup\{e_{1}\}\in\mathcal{B}(M)$ and $A\cup\{e_{2}\}\in\mathcal{B}(M)$. It is obvious $\{e_{1},e_{2}\}\in\mathcal{I}(M)$. By (I3) of Definition~\ref{definition2}, we know that there exists some $D\subset A$, where $|D|=|A|-1$, such that $D\cup\{e_{1},e_{2}\}\in\mathcal{B}(M)$. Let $A-D=\{a\}$, $A_{1}=D\cup\{e_{1}\}$ and $A_{2}=D\cup\{e_{2}\}$. Then $a\in K_{M}(A_{1})$ and $a\in K_{M}(A_{2})$. Hence $a\in K_{M}(A_{1})\cap K_{M}(A_{2})$. Since $e_{2}\in K_{M}(A_{1})-K_{M}(A_{2})$, $K_{M}(A_{1})\neq K_{M}(A_{2})$. Therefore $F(M)$ is not a partition.

($\Leftarrow$): By Proposition~\ref{proposition46}, we know that $F(M)$ is a covering on $\cup\mathcal{B}(M)$. So we need to prove only that for any $K_{M}(A_{1}),K_{M}(A_{2})\in F(M)$, if $K_{M}(A_{1})\cap K_{M}(A_{2})\neq\emptyset$, $K_{M}(A_{1})=K_{M}(A_{2})$. If ${A_{1}}={A_{2}}$, the conclusion is obviously true. Below we suppose that ${A_{1}}\neq{A_{2}}$. Let $a\in K_{M}(A_{1})\cap K_{M}(A_{2})$. Then $\{a\}\cup A_{1}\in\mathcal{B}(M)$ and $\{a\}\cup A_{2}\in\mathcal{B}(M)$. For any $b\in K_{M}(A_{1})$, we will prove that $b\in K_{M}(A_{2})$. If $b=a$, the conclusion is obviously true. Below we suppose $b\neq a$. We claim that $b\notin A_{2}$. Otherwise, suppose $b\in A_{2}$. Then $b\in A_{2}\cup\{a\}$. Since $a\in A_{2}\cup\{a\}$, $A_{1}\cup\{b\}\in\mathcal{B}(M)$ and $A_{1}\cup\{a\}\in\mathcal{B}(M)$, by $M$ is a unique expansion matroid, we have that $b=a$. It is
contradictory. Thus $b\notin A_{2}\cup\{a\}$, therefore $b\in(A_{1}\cup\{b\})-(A_{2}\cup\{a\})$. By Theorem~\ref{theorem123}, we know that there exists some $g\in(A_{2}\cup\{a\})-(A_{1}\cup\{b\})$ such that $((A_{2}\cup\{a\})-\{g\})\cup\{b\}\in\mathcal{B}(M)$. We claim that $g=a$. Otherwise, suppose $g\neq a$. We have that $a\in((A_{2}\cup\{a\})-\{g\})\cup\{b\}$. Since $b\in((A_{2}\cup\{a\})-\{g\})\cup\{b\}$, $A_{1}\cup\{b\}\in\mathcal{B}(M)$ and $A_{1}\cup\{a\}\in\mathcal{B}(M)$, by $M$ is a unique expansion matroid, we have that $b=a$. It is
contradictory. Thus $g=a$. Hence $((A_{2}\cup\{a\})-\{g\})\cup\{b\}=A_{2}\cup\{b\}$, thus
$b\in K_{M}(A_{2})$. Therefore $K_{M}(A_{1})\subseteq K_{M}(A_{2})$. Similarly, $K_{M}(A_{2})\subseteq K_{M}(A_{1})$. So $K_{M}(A_{1})=K_{M}(A_{2})$. Then $F(M)$ is a partition on $\cup\mathcal{B}(M)$. $\Box$

\end{proof}

The following proposition presents a property of the bases of unique expansion matroids.

\begin{proposition}
\label{propositionH}
Let $M$ be a unique expansion matroid. For any $B\in\mathcal{B}(M)$ and any $D\in F(M)$, $|B\cap D|=1$.
\end{proposition}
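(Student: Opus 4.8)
The plan is to combine the two preceding structural results: Proposition~\ref{proposition124} says $\cup F_M(B)=\cup\mathcal{B}(M)$, and Lemma~\ref{lemmaE} says that each $b\in B$ lies in exactly one block of $F_M(B)$. First I would observe that since $M$ is a unique expansion matroid, Theorem~\ref{theorem50} tells us $F(M)$ is a partition of $\cup\mathcal{B}(M)$, so in particular its blocks are pairwise disjoint. Fix $B\in\mathcal{B}(M)$ and $D\in F(M)$.

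The main step is to show $F_M(B)=F(M)$. The inclusion $F_M(B)\subseteq F(M)$ is immediate from the definitions. For the reverse, I would take any block $K_M(A)\in F(M)$ with $A\in s(M)$ and show it equals some block of $F_M(B)$: pick any $a\in K_M(A)$ (nonempty since $A\cup\{a\}\in\mathcal{B}(M)$); then $a\in\cup\mathcal{B}(M)=\cup F_M(B)$ by Proposition~\ref{proposition124}, so $a$ lies in some block $K_M(A')\in F_M(B)$; since $a\in K_M(A)\cap K_M(A')$ and $F(M)$ is a partition, $K_M(A)=K_M(A')\in F_M(B)$. Hence $F_M(B)=F(M)$, and therefore $D\in F_M(B)$.

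Once $D\in F_M(B)$, Lemma~\ref{lemmaE} applied in the symmetric form—i.e.\ counting which elements of $B$ lie in the fixed block $D$—gives $|B\cap D|=1$. More precisely, writing $B=\{b_1,\dots,b_r\}$ with $r=r(M)$, Proposition~\ref{proposition341} says $F_M(B)$ has exactly $r$ blocks, and by Lemma~\ref{lemmaE} each $b_i$ belongs to exactly one of them; since there are $r$ elements and $r$ blocks, this sets up a bijection between $B$ and $F_M(B)=F(M)$ in which $b_i\leftrightarrow$ the unique block containing it, forcing $|B\cap D|=1$ for every block $D$. Alternatively one can argue directly: $B\cap D\neq\emptyset$ because $B\subseteq\cup F_M(B)$ and the blocks partition this union (so some $b_i\in D$), while $|B\cap D|\le 1$ because any two distinct elements of $B$ lie in distinct blocks by Lemma~\ref{lemmaE}.

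I expect the identity $F_M(B)=F(M)$ to be the crux; it is the point where the unique-expansion hypothesis (via the partition property of Theorem~\ref{theorem50}) is genuinely used, and everything else is bookkeeping with Proposition~\ref{proposition124} and Lemma~\ref{lemmaE}. One should be a little careful about the degenerate case $r(M)=1$, but there Lemma~\ref{lemma66} gives $F_M(B)=\{\cup\mathcal{B}(M)\}$ directly and the claim is trivial, so it causes no real trouble.
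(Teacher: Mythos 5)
Your argument is correct, but it takes a genuinely different and heavier route than the paper. The paper's proof is a two-step direct verification from the definitions: writing $D=K_{M}(A)$ with $A\in s(M)$, property (I3) of Definition~\ref{definition2} produces some $g\in B-A$ with $A\cup\{g\}\in\mathcal{B}(M)$, so $g\in B\cap D$ and the intersection is nonempty; then for any $h\in B\cap D$ the pair $g,h$ violates Definition~\ref{definition49} unless $g=h$, so $B\cap D=\{g\}$. No partition structure is needed at all. You instead invoke Theorem~\ref{theorem50} to get that $F(M)$ is a partition, prove the intermediate identity $F_{M}(B)=F(M)$ via Proposition~\ref{proposition124}, and finish by a counting/bijection argument with Lemma~\ref{lemmaE} and Proposition~\ref{proposition341}. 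This is valid (there is no circularity, since Theorem~\ref{theorem50} precedes this proposition), and it yields the nice by-product $F_{M}(B)=F(M)$, from which $|F(M)|=r(M)$ falls out immediately --- essentially one direction of Theorem~\ref{theorem126}. The trade-off is that you lean on the hardest theorem of the section to prove something that follows in three lines from (I3) and the unique-expansion definition. One small imprecision: you cite Lemma~\ref{lemmaE} for the claim that distinct elements of $B$ lie in distinct blocks of $F_{M}(B)$; the lemma literally says each element lies in exactly one block, which is not the same assertion. The injectivity you need does hold, but it comes from the observation (as in the proof of Proposition~\ref{proposition341}) that $b_{i}$ lies only in $K_{M}(B-\{b_{i}\})$ because $K_{M}(X)\cap X=\emptyset$; alternatively, surjectivity is immediate since each block $K_{M}(B-\{b_{i}\})$ contains $b_{i}$, and with $|B|=|F_{M}(B)|=r(M)$ either property gives the bijection you want.
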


\begin{proof}

Let $D=K_{M}(A)$, where $A\in s(M)$. By (I3) of Definition~\ref{definition2}, there exists some $g\in B-A$ such that $A\cup\{g\}\in\mathcal{B}(M)$. Hence $g\in K_{M}(A)$, thus $g\in B\cap K_{M}(A)$. For any $h\in B\cap K_{M}(A)$, we have that $h\in B$ and $A\cup\{h\}\in\mathcal{B}(M)$. Since $M$ is a unique expansion matroid, $g=h$. Then $B\cap K_{M}(A)=\{g\}$. Hence $|B\cap D|=1$. $\Box$

\end{proof}

Now we can answer the question that under what conditions the cardinality of the forming base family of a matroid is equal to the rank of the matroid.

\begin{theorem}
\label{theorem126}
$|F(M)|=r(M)$ iff $M$ is a unique expansion matroid.
\end{theorem}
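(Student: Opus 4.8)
\textbf{Proof proposal for Theorem~\ref{theorem126}.} The plan is to prove both directions by combining the characterization of unique expansion matroids as those whose forming base family is a partition (Theorem~\ref{theorem50}) with the already-established lower bound $|F(M)|\geq r(M)$ (Corollary~\ref{corollary423}) and the fact that $|F_M(B)|=r(M)$ for every base $B$ (Proposition~\ref{proposition341}).

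For the direction ($\Leftarrow$): assume $M$ is a unique expansion matroid. By Theorem~\ref{theorem50}, $F(M)$ is a partition on $\cup\mathcal{B}(M)$. Fix any base $B\in\mathcal{B}(M)$. By Proposition~\ref{propositionH}, every block $D\in F(M)$ satisfies $|B\cap D|=1$, so the blocks of $F(M)$ are in bijection with the elements of $B$ via $D\mapsto$ the unique element of $B\cap D$ (this map is injective since $F(M)$ is a partition, and surjective since each $b\in B$ lies in some block). Hence $|F(M)|=|B|=r(M)$. (Alternatively one can note $F_M(B)\subseteq F(M)$ and use Proposition~\ref{propositionH} together with Proposition~\ref{proposition124} to force $F_M(B)=F(M)$, then apply Proposition~\ref{proposition341}.)

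For the direction ($\Rightarrow$): assume $|F(M)|=r(M)$ and suppose for contradiction that $M$ is not a unique expansion matroid. Then by Theorem~\ref{theorem50}, $F(M)$ is not a partition on $\cup\mathcal{B}(M)$, so (since it is a covering by Proposition~\ref{proposition46}) there are distinct blocks $K_M(A_1),K_M(A_2)\in F(M)$ with $K_M(A_1)\cap K_M(A_2)\neq\emptyset$. I then want to exhibit a base $B$ with $F_M(B)\subsetneq F(M)$: pick a base $B$ containing a suitable secondary base so that the overlapping blocks witness that $|F(M)|>|F_M(B)|=r(M)$, contradicting the hypothesis. Concretely, using the mechanism in the proof of Theorem~\ref{theorem50}, the failure of uniqueness produces a base $B=D\cup\{e_1,e_2\}$ and two secondary bases $A_1=D\cup\{e_1\}$, $A_2=D\cup\{e_2\}$ with $K_M(A_1)\cap K_M(A_2)\ni a$ but $K_M(A_1)\neq K_M(A_2)$; since $a\notin B$, the element $a$ lies in at least two blocks of $F(M)$ that are not both of the form $F_M(B)$, so $F(M)$ strictly contains $F_M(B)$ for this $B$, giving $|F(M)|>r(M)$.

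The main obstacle is making the $(\Rightarrow)$ direction airtight: I must be careful that the pair of distinct intersecting blocks of $F(M)$ actually forces $|F(M)|$ to strictly exceed $r(M)$ rather than merely forcing $F(M)$ to be non-partition. The cleanest route is probably to argue the contrapositive more directly: show that $|F(M)|=r(M)$ together with $F_M(B)\subseteq F(M)$ and $|F_M(B)|=r(M)$ forces $F_M(B)=F(M)$ for \emph{every} base $B$, then invoke Proposition~\ref{proposition124} and Lemma~\ref{lemmaE} to conclude that each element of $\cup\mathcal{B}(M)$ lies in exactly one block of $F(M)$, i.e. $F(M)$ is a partition, and finish with Theorem~\ref{theorem50}. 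This avoids constructing witnesses by hand and keeps the argument symmetric with the $(\Leftarrow)$ direction.
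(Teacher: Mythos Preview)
Your $(\Leftarrow)$ direction and the ``cleanest route'' you settle on for $(\Rightarrow)$ are both correct and are essentially the paper's argument. The paper proves $(\Rightarrow)$ by contradiction: if $F(M)$ is not a partition, pick distinct $K_p,K_q\in F(M)$ sharing an element $b$, choose a base $B$ with $b\in B$, and apply Lemma~\ref{lemmaE} to see that at most one of $K_p,K_q$ lies in $F_M(B)$, whence $|F(M)|\geq|\{K_p,K_q\}\cup F_M(B)|\geq r(M)+1$. Your final route is the direct form of the same idea: from $|F(M)|=r(M)=|F_M(B)|$ and $F_M(B)\subseteq F(M)$ deduce $F_M(B)=F(M)$ for every base $B$, then for each $x\in\cup\mathcal{B}(M)$ choose $B\ni x$ and apply Lemma~\ref{lemmaE} to get uniqueness of the block through $x$.

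Your first concrete attempt at $(\Rightarrow)$, however, fails for a specific reason worth noting. With $B=D\cup\{e_1,e_2\}$ and $A_1=D\cup\{e_1\}$, $A_2=D\cup\{e_2\}$, you have $A_1,A_2\subseteq B$, so \emph{both} $K_M(A_1)$ and $K_M(A_2)$ already belong to $F_M(B)$; the claim that they are ``not both of the form $F_M(B)$'' is false, and no strict containment $F_M(B)\subsetneq F(M)$ follows from this choice. Lemma~\ref{lemmaE} only constrains how many blocks of $F_M(B)$ pass through an element \emph{of $B$}, and your shared element $a$ lies outside $B$. The repair---choose instead a base containing the shared element---is precisely what the paper does and what your final route encodes.
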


\begin{proof}

($\Rightarrow$): By Theorem~\ref{theorem50}, we need to prove only that $F(M)$ is a partition on $\cup\mathcal{B}(M)$. By Proposition~\ref{proposition46}, we know that $F(M)$ is a covering on $\cup\mathcal{B}(M)$. We use the proof by contradiction. Suppose $F(M)$ is not a partition on $\cup\mathcal{B}(M)$. Then there exist some $K_{p},K_{q}\in F(M)$, where $K_{p}\neq K_{q}$, such that $K_{p}\cap K_{q}\neq\emptyset$. Suppose $b\in K_{p}\cap K_{q}$. It is obvious there exists some $B\in\mathcal{B}(M)$ such that $b\in B$. By Lemma~\ref{lemmaE} and Proposition~\ref{proposition341}, we know that $|F(M)|\geq|\{K_{p},K_{q}\}\cup F_{M}(B)|\geq r(M)+1$. It is contradictory.

($\Leftarrow$): Let $B\in\mathcal{B}(M)$ and $F(M)=\{D_{1},D_{2},\cdots,D_{t}\}$. By Proposition~\ref{propositionH}, we know that for any $1\leq i\leq t$, it follows that $|B\cap D_{i}|=1$. Let $\{d_{i}\}=B\cap D_{i}$.
By Theorem~\ref{theorem50}, we know that $F(M)$ is a partition on $\cup\mathcal{B}(M)$. Hence for any $1\leq i<j\leq t$, $d_{i}\neq d_{j}$. Thus $|F(M)|\leq r(M)$. Suppose $|F(M)|<r(M)$. Since $B\subseteq\cup\mathcal{B}(M)=\cup F(M)$, there exists some $F\in F(M)$ such that $|B\cap F|>1$. It is contradictory. Hence $|F(M)|=r(M)$. $\Box$

\end{proof}

The following proposition gives a sufficient condition for a subset of $E(M)$ to be a base of a unique expansion  matroid $M$.

\begin{proposition}
\label{propositionJ}
Let $M$ be a unique expansion matroid, $B\subseteq\cup\mathcal{B}(M)$ and for any $D\in F(M)$, it follows that $|B\cap D|=1$. Then $B\in\mathcal{B}(M)$.
\end{proposition}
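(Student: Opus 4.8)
The plan is to show that $B$ has the right cardinality and that it is independent; since $|B| = r(M)$ and $B$ is independent, $B$ will be a base. First I would observe that, since $F(M)$ is a partition on $\cup\mathcal{B}(M)$ by Theorem~\ref{theorem50}, the hypothesis "$|B\cap D| = 1$ for every $D\in F(M)$" forces $|B| = |F(M)|$, and by Theorem~\ref{theorem126} this equals $r(M)$. So the whole content is to prove $B\in\mathcal{I}(M)$; equivalently, since $|B| = r(M)$, that $B$ is a base.

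The natural route is induction on $r(M)$, building $B$ up one element at a time while staying inside $\mathcal{I}(M)$. Fix a genuine base $B_0\in\mathcal{B}(M)$; by Proposition~\ref{propositionH}, $B_0$ also meets every $D\in F(M)$ in exactly one point. I would try to transform $B_0$ into $B$ by a sequence of single-element exchanges, each of which keeps the set a base. Concretely: pick an element $b\in B\setminus B_0$ lying in block $D = K_M(A)\in F(M)$; then $B_0\cap D = \{b_0\}$ for a unique $b_0$. I want to replace $b_0$ by $b$ in $B_0$ and get another base. Since $b\in K_M(A)$, the set $A\cup\{b\}$ is a base; the task is to promote this to an exchange inside $B_0$ itself, i.e. to show $(B_0\setminus\{b_0\})\cup\{b\}\in\mathcal{B}(M)$, and — crucially — that this new base still meets every block of $F(M)$ in one point (which is automatic once it is a base, by Proposition~\ref{propositionH}), and that it agrees with $B$ on strictly more blocks than $B_0$ did. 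Iterating drives $B_0$ to $B$.

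The key exchange step: write $b_0\in B_0\setminus(A\cup\{b\})$ — one must first check $b_0\notin A$, which holds because $A\subseteq K_M(A)^c$ so $A\cap D = \emptyset$ while $b_0\in D$; also $b_0\neq b$ since $b\notin B_0$. By Theorem~\ref{theorem123} applied to the bases $B_0$ and $A\cup\{b\}$ and the element $b_0\in B_0\setminus(A\cup\{b\})$, there is some $g\in (A\cup\{b\})\setminus B_0$ with $((A\cup\{b\})\setminus\{g\})\cup\{b_0\}\in\mathcal{B}(M)$. I would then argue $g = b$ using unique expansion: if $g\in A$, then $(A\setminus\{g\})\cup\{b,b_0\}$ is a base, so both $A\cup\{b\}$ and $A\cup\{b_0\}$ would be bases — wait, more carefully, $b_0\in K_M((A\setminus\{g\})\cup\{b\})$ and $b\in K_M((A\setminus\{g\})\cup\{b_0\})$; one pushes this to a contradiction with uniqueness exactly as in the proof of Theorem~\ref{theorem50}, forcing $g=b$ and hence $A\cup\{b_0\}$ — no: $g=b$ gives $A\cup\{b_0\}\in\mathcal{B}(M)$, contradicting $b_0\in D$ and $A\cup\{b_0\}$ sharing block $D$... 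Here I would instead run the exchange in the other direction, taking $b\in (A\cup\{b\})\setminus B_0$ out and trading against $B_0$, obtaining $(B_0\setminus\{b'\})\cup\{b\}\in\mathcal{B}(M)$ for some $b'\in B_0$, and then identifying $b' = b_0$ via the partition property: $b'$ and $b$ lie in the same block (since $(B_0\setminus\{b'\})\cup\{b\}$ meets $D$ once, in $b$, so $b'\in D$), hence $b' = b_0$.

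The main obstacle I anticipate is precisely this bookkeeping in the exchange step — ensuring that the new base obtained from Theorem~\ref{theorem123} differs from $B_0$ by exactly the swap $b_0\leftrightarrow b$ within the single block $D$, so that the induction measure (number of blocks on which the current base agrees with $B$) strictly increases and no previously-matched block is disturbed. This hinges on combining unique expansion with the partition structure of $F(M)$; once that single-swap lemma is in hand, the induction is routine and terminates at $B$, giving $B\in\mathcal{B}(M)\subseteq\mathcal{I}(M)$.
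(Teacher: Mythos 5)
Your proposal is correct and follows essentially the same route as the paper: both reduce the problem to a single-swap lemma --- replacing the unique element of a base lying in a block $D$ of $F(M)$ by the desired element of $B\cap D$ yields another base --- and iterate it, relying on Proposition~\ref{propositionH} and the partition property of $F(M)$. The only cosmetic differences are that the paper justifies the swap directly from $K_{M}(B_{s}-\{d\})=P_{s+1}$ rather than detouring through Theorem~\ref{theorem123}, and phrases the induction as growing an independent prefix $\{b_{1},\dots,b_{s}\}$ inside a sequence of ambient bases rather than as morphing one fixed base into $B$.
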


\begin{proof}

By Theorem~\ref{theorem126}, we know that $|F(M)|=r(M)$.
Let $r(M)=t$ and $F(M)=\{P_{1},P_{2},\cdots,P_{t}\}$. Again let $B\cap P_{i}=\{b_{i}\}$, where $1\leq i\leq t$. Hence $B=\{b_{1},b_{2},\cdots,b_{t}\\
\}$. Let $1\leq s\leq t-1$. We claim that if $\{b_{1},b_{2},\cdots,b_{s}\}\in\mathcal{I}(M)$, $\{b_{1},b_{2},\cdots,b_{s},b_{s+1}\}\in\mathcal{I}(M)$. It is obvious there exists some $B_{s}\in\mathcal{B}(M)$ such that $\{b_{1},b_{2},\cdots,b_{s}\}\subseteq B_{s}$. By Proposition~\ref{propositionH}, we know that $|B_{s}\cap P_{s+1}|=1$.
Let $B_{s}\cap P_{s+1}=\{d\}$. Since $d\in K_{M}(B_{s}-\{d\})\in F(M)$, $d\in P_{s+1}$ and that $F(M)$ is a partition, $K_{M}(B_{s}-\{d\})=P_{s+1}$. By $b_{s+1}\in P_{s+1}$, we have that $b_{s+1}\in K_{M}(B_{s}-\{d\})$. Hence $(B_{s}-\{d\})\cup\{b_{s+1}\}\in\mathcal{B}(M)$. Since $\{b_{1},b_{2},\cdots,b_{s},b_{s+1}\}\subseteq(B_{s}-\{d\})\cup\{b_{s+1}\}$, $\{b_{1},b_{2},\cdots,b_{s},\\
b_{s+1}\}\in\mathcal{I}(M)$.
By $b_{1}\in B\subseteq\cup\mathcal{B}(M)$, we have that  $\{b_{1}\}\in\mathcal{I}(M)$. Thus $\{b_{1},b_{2}\}\in\mathcal{I}(M)$. We know that this procedure can carry on until $\{b_{1},b_{2},\cdots,b_{t}\}\in\mathcal{I}(M)$. By $r(M)=t$, we have that  $\{b_{1},b_{2},\cdots,b_{t}\}\in\mathcal{B}(M)$. $\Box$

\end{proof}

By Proposition~\ref{propositionH} and Proposition~\ref{propositionJ}, we obtain a necessary and sufficient condition for a subset of $E(M)$ to be a base of a unique expansion matroid $M$.

\begin{proposition}
\label{proposition51}
Let $M$ be a unique expansion matroid. Then $B\in\mathcal{B}(M)$ iff
$B\subseteq\cup\mathcal{B}(M)$ and for any $D\in F(M)$, it follows that $|B\cap D|=1$.
\end{proposition}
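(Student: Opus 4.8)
The plan is to derive Proposition~\ref{proposition51} directly from the two results that have just been established, namely Proposition~\ref{propositionH} and Proposition~\ref{propositionJ}, since together they provide both implications of the stated biconditional. Since $M$ is assumed to be a unique expansion matroid throughout, both auxiliary propositions are available with no extra hypotheses.

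First I would prove the forward direction. Assume $B\in\mathcal{B}(M)$. The inclusion $B\subseteq\cup\mathcal{B}(M)$ is immediate from the definition of $\cup\mathcal{B}(M)$, since $B$ is one of the sets whose union is being taken. For the second condition, fix an arbitrary $D\in F(M)$; then Proposition~\ref{propositionH}, applied to the unique expansion matroid $M$ with this base $B$ and this member $D$ of $F(M)$, gives exactly $|B\cap D|=1$. This settles the ``only if'' part in one line.

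Next I would prove the backward direction. Assume $B\subseteq\cup\mathcal{B}(M)$ and that $|B\cap D|=1$ for every $D\in F(M)$. These are precisely the hypotheses of Proposition~\ref{propositionJ}, so that proposition yields $B\in\mathcal{B}(M)$ immediately.

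There is essentially no obstacle here: the statement is a packaging of Propositions~\ref{propositionH} and~\ref{propositionJ} into a single equivalence, and the proof amounts to citing each of them for the appropriate direction. The only point requiring a word of care is that the forward direction needs $|B\cap D|=1$ for \emph{every} $D\in F(M)$, so one must invoke Proposition~\ref{propositionH} for an arbitrary $D$ rather than a fixed one, but this is routine since Proposition~\ref{propositionH} is stated for all $D\in F(M)$. I would therefore keep the proof to three or four sentences.
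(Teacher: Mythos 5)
Your proof is correct and is exactly the paper's approach: the paper states Proposition~\ref{proposition51} as an immediate consequence of Proposition~\ref{propositionH} (forward direction, noting $B\subseteq\cup\mathcal{B}(M)$ trivially) and Proposition~\ref{propositionJ} (backward direction), with no further argument.
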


By the above proposition, we give a relationship between the base family and the forming base family of a unique expansion matroid. It indicates that any base of a unique expansion matroid can be obtained by selecting one and only one element from every block of $F(M)$.

\begin{proposition}
\label{proposition125}
Let $M$ be a unique expansion matroid and $F(M)=\{K_{1},K_{2},
\cdots,K_{t}\}$.
Then $\mathcal{B}(M)=\{\{b_{1},b_{2},
\cdots,b_{t}\}|b_{i}\in K_{i},1\leq i\leq t\}$.
\end{proposition}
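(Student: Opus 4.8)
The plan is to prove both inclusions between $\mathcal{B}(M)$ and the family $\mathcal{G} = \{\{b_1, b_2, \dots, b_t\} \mid b_i \in K_i, 1 \leq i \leq t\}$ by invoking Proposition~\ref{proposition51}, which already characterizes the bases of a unique expansion matroid as exactly those subsets of $\cup\mathcal{B}(M)$ meeting every block of $F(M)$ in precisely one element. So really the work is to translate that characterization into the explicit ``transversal'' description.

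First I would handle the inclusion $\mathcal{B}(M) \subseteq \mathcal{G}$. Take $B \in \mathcal{B}(M)$. By Proposition~\ref{propositionH}, $|B \cap K_i| = 1$ for each $i$, so write $B \cap K_i = \{b_i\}$. Then $B = \bigcup_{i=1}^{t}(B \cap K_i)$ — here I would use that $F(M) = \{K_1, \dots, K_t\}$ is a partition of $\cup\mathcal{B}(M)$ (Theorem~\ref{theorem50}) together with $B \subseteq \cup\mathcal{B}(M)$ — hence $B = \{b_1, \dots, b_2, \dots, b_t\}$ with $b_i \in K_i$, i.e. $B \in \mathcal{G}$. (One should note in passing that $|F(M)| = r(M) = t$ by Theorem~\ref{theorem126}, so listing the $t$ blocks is legitimate and the $b_i$ are pairwise distinct since the $K_i$ are disjoint.)

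For the reverse inclusion $\mathcal{G} \subseteq \mathcal{B}(M)$, take any set of the form $B = \{b_1, \dots, b_t\}$ with $b_i \in K_i$. Since the $K_i$ are pairwise disjoint, the $b_i$ are pairwise distinct, so $|B| = t$ and $B \subseteq \cup F(M) = \cup\mathcal{B}(M)$ by Proposition~\ref{proposition46}. Moreover, for each block $D = K_j \in F(M)$, disjointness of the partition gives $B \cap K_j = \{b_j\}$, so $|B \cap D| = 1$. Proposition~\ref{proposition51} then yields $B \in \mathcal{B}(M)$.

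I do not expect a genuine obstacle here: this proposition is a bookkeeping consequence of the machinery already developed (Propositions~\ref{proposition46},~\ref{propositionH},~\ref{proposition51}, Theorems~\ref{theorem50},~\ref{theorem126}). The only point requiring a little care is making sure the two descriptions of $B$ agree — that a base, being a one-element-per-block transversal of a \emph{partition} that exhausts $\cup\mathcal{B}(M)$, is literally the union of those singletons and nothing more — and conversely that an abstract transversal automatically lands inside $\cup\mathcal{B}(M)$ and has the right cardinality. Both follow immediately from the partition property, so the proof is short.
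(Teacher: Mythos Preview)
Your proposal is correct and follows essentially the same approach as the paper: both directions rest on Theorem~\ref{theorem50}, Proposition~\ref{proposition46}, and Proposition~\ref{proposition51} (together with Proposition~\ref{propositionH}). The only cosmetic difference is that for the inclusion $\mathcal{B}(M)\subseteq\mathcal{G}$ the paper argues by contradiction that $B-\{b_1,\dots,b_t\}=\emptyset$, whereas you observe directly that $B=\bigcup_i(B\cap K_i)$ because the $K_i$ partition $\cup\mathcal{B}(M)\supseteq B$; this is a minor streamlining, not a different route.
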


\begin{proof}

By Theorem~\ref{theorem50}, we know that $F(M)$ is a partition on $\cup\mathcal{B}(M)$. Let $H=\{\{b_{1},b_{2},\\
\cdots,b_{t}\}|b_{i}\in K_{i},1\leq i\leq t\}$. For any $B\in\mathcal{B}(M)$ and any $K_{i}\in F(M)$, by Proposition~\ref{proposition51}, we know that $|B\cap K_{i}|=1$. Let $\{b_{i}\}=B\cap K_{i}$. We have that $\{b_{1},b_{2},\cdots,b_{t}\}\subseteq B$. Suppose $B-\{b_{1},b_{2},\cdots,b_{t}\}\neq\emptyset$. Then $|B|>t$. By Proposition~\ref{proposition46}, we have that $B\subseteq\cup F(M)$. Then there exists some $K_{j}\in F(M)$ such that $|B\cap K_{j}|>1$. It is contradictory. Hence $B=\{b_{1},b_{2},\cdots,b_{t}\}$, thus $B\in H$, therefore $\mathcal{B}(M)\subseteq H$. For any $\{b_{1},b_{2},\cdots,b_{t}\}\in H$, by Proposition~\ref{proposition46}, we have that $\{b_{1},b_{2},\cdots,b_{t}\}\subseteq\cup\mathcal{B}(M)$. For any $K_{i}\in F(M)$, it is obvious $|\{b_{1},b_{2},\cdots,b_{t}\}\cap K_{i}|=1$. By Proposition~\ref{proposition51}, we know that $\{b_{1},b_{2},\cdots,b_{t}\}\in\mathcal{B}(M)$. Thus $H\subseteq\mathcal{B}(M)$. Hence $\mathcal{B}(M)=H$. Then $\mathcal{B}(M)=\{\{b_{1},b_{2},\cdots,b_{t}\}|b_{i}\in K_{i},1\leq i\leq t\}$. $\Box$

\end{proof}

Unique expansion matroids are defined by certain properties, not by
specific structures. We want to know whether or not there exist some existing matroids which are unique expansion matroids. In the following subsection, we will answer this question.

\subsection{Unique partition matroid}

After the concept of matroid was proposed, many types of matroids were constructed. Partition matroids introduced in~\cite{LiuChen94Matroid,LiuZhuZhang12Relationshipbetween} are one type of them. In this subsection, we firstly introduce the concept of partition matroid and give a necessary and sufficient condition for a subset of $E(M)$ to be a base of a partition matroid $M$. Then we focus on studying a special type of partition matroids. Finally, it is shown that this special type of partition matroids and unique expansion matroids are the same.

\begin{proposition}(~\cite{LiuChen94Matroid,LiuZhuZhang12Relationshipbetween})
\label{proposition302}
Let $E$ be a finite set and $P=\{P_{1},P_{2},\cdots,P_{m}\}$ be a partition on $E$. Let $k_{1},\cdots,k_{m}$ be a group of nonnegative integers, which satisfy $k_{i}\leq|P_{i}|$. Let $\mathcal{I}(P;k_{1},\cdots,k_{m})=\{X\subseteq E||X\cap P_{i}|\leq k_{i},1\leq i\leq m\}$. Then $(E,\mathcal{I}(P;k_{1},\cdots,k_{m}))$ is a matroid.
\end{proposition}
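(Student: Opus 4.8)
The plan is to verify the three matroid independence axioms (I1)--(I3) from Definition~\ref{definition2} for the pair $(E,\mathcal{I}(P;k_{1},\cdots,k_{m}))$, where $\mathcal{I}(P;k_{1},\cdots,k_{m})=\{X\subseteq E \mid |X\cap P_{i}|\leq k_{i},\ 1\leq i\leq m\}$. First I would check (I1): the empty set satisfies $|\emptyset\cap P_{i}|=0\leq k_{i}$ for every $i$ since each $k_{i}$ is a nonnegative integer, so $\emptyset\in\mathcal{I}(P;k_{1},\cdots,k_{m})$. Next, (I2): if $X\in\mathcal{I}(P;k_{1},\cdots,k_{m})$ and $X'\subseteq X$, then for each $i$ we have $X'\cap P_{i}\subseteq X\cap P_{i}$, hence $|X'\cap P_{i}|\leq|X\cap P_{i}|\leq k_{i}$, so $X'\in\mathcal{I}(P;k_{1},\cdots,k_{m})$. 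Both are routine once the counting is made explicit.

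The substance of the proof lies in (I3). Suppose $X_{1},X_{2}\in\mathcal{I}(P;k_{1},\cdots,k_{m})$ with $|X_{1}|<|X_{2}|$. Since $P$ is a partition of $E$, we have $|X_{1}|=\sum_{i=1}^{m}|X_{1}\cap P_{i}|$ and likewise for $X_{2}$, so $\sum_{i}|X_{1}\cap P_{i}|<\sum_{i}|X_{2}\cap P_{i}|$. Hence there must exist an index $j$ with $|X_{1}\cap P_{j}|<|X_{2}\cap P_{j}|$. In particular $|X_{2}\cap P_{j}|\geq|X_{1}\cap P_{j}|+1\geq 1$, so $(X_{2}\cap P_{j})-(X_{1}\cap P_{j})$ is nonempty; pick $e$ in it. Then $e\in X_{2}-X_{1}$ and $e\in P_{j}$. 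I claim $X_{1}\cup\{e\}\in\mathcal{I}(P;k_{1},\cdots,k_{m})$: for $i\neq j$, $(X_{1}\cup\{e\})\cap P_{i}=X_{1}\cap P_{i}$ (as $e\notin P_{i}$ by disjointness of the partition blocks), so the bound $k_{i}$ is still respected; for $i=j$, $|(X_{1}\cup\{e\})\cap P_{j}|=|X_{1}\cap P_{j}|+1\leq|X_{2}\cap P_{j}|\leq k_{j}$, using that $X_{2}\in\mathcal{I}(P;k_{1},\cdots,k_{m})$. Thus (I3) holds.

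The only genuine obstacle is making sure the pigeonhole step in (I3) is airtight, namely that the strict inequality on total cardinalities forces strict inequality on at least one block, and then that the chosen element $e$ lands in a block where $X_{1}$ has slack bounded by $X_{2}$'s count — this is exactly where the hypothesis $k_{j}\leq|P_{j}|$ is not even needed, but the disjointness of the $P_{i}$ is essential (it is what lets us both decompose cardinalities additively and conclude that adding $e$ perturbs only one block). I would therefore be careful to invoke condition~$(3)$ of Definition~\ref{definition68} explicitly at these two points. With (I1), (I2), (I3) all verified, Definition~\ref{definition2} gives that $(E,\mathcal{I}(P;k_{1},\cdots,k_{m}))$ is a matroid, completing the proof. $\Box$
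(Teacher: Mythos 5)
Your proof is correct. The paper itself gives no proof of Proposition~\ref{proposition302}; it is quoted from the cited references, so there is nothing to compare against, but your verification of (I1)--(I3) is the standard argument and is airtight: the additive decomposition $|X|=\sum_{i}|X\cap P_{i}|$ valid because $P$ partitions $E$, the pigeonhole selection of a block $j$ with $|X_{1}\cap P_{j}|<|X_{2}\cap P_{j}|$, and the observation that adjoining $e\in(X_{2}\cap P_{j})-(X_{1}\cap P_{j})$ perturbs only the $j$-th count, which stays bounded by $|X_{2}\cap P_{j}|\leq k_{j}$. Your side remark that $k_{i}\leq|P_{i}|$ is never used is also accurate; that hypothesis only serves to normalize the construction, not to make the axioms hold.
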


By the above proposition, we introduce the following definition.

\begin{definition}(Partition matroid)
\label{definition69}
Matroid $(E,\mathcal{I}(P;k_{1},\cdots,k_{m}))$ is denoted as $M(P;\\
k_{1},\cdots,
k_{m})$ and called a partition matroid.
\end{definition}

The following proposition gives a necessary and sufficient condition for a subset of $E(M)$ to be a base of a partition matroid $M$.

\begin{proposition}
\label{proposition303}
$B\in\mathcal{B}(M(P;k_{1},\cdots,k_{m}))$ iff $B\subseteq \cup P$ and for any $P_{i}\in P$, it follows that $|B\cap P_{i}|=k_{i}$.
\end{proposition}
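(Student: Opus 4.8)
The plan is to prove both directions of the biconditional by working directly with the independent set axioms of the partition matroid $M(P;k_1,\dots,k_m)$, recalling that $\mathcal{I}(P;k_1,\dots,k_m) = \{X \subseteq E \mid |X \cap P_i| \le k_i, 1 \le i \le m\}$ and that $\mathcal{B}(M) = \mathit{Max}(\mathcal{I})$.

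For the forward direction, suppose $B \in \mathcal{B}(M(P;k_1,\dots,k_m))$. Since $B \in \mathcal{I}$ we immediately have $|B \cap P_i| \le k_i$ for each $i$, and since $P$ is a partition of $E$ with $\cup P = E$ we trivially have $B \subseteq \cup P$. It remains to show $|B \cap P_i| = k_i$ for every $i$. I would argue by contradiction: if $|B \cap P_j| < k_j$ for some $j$, then since $k_j \le |P_j|$ there exists an element $e \in P_j - B$. Set $B' = B \cup \{e\}$; then $|B' \cap P_j| = |B \cap P_j| + 1 \le k_j$ and $|B' \cap P_i| = |B \cap P_i| \le k_i$ for $i \ne j$ (because the $P_i$ are pairwise disjoint, so adding $e \in P_j$ does not change the intersection with any other block). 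Hence $B' \in \mathcal{I}$ with $B \subsetneq B'$, contradicting the maximality of $B$ in $\mathit{Max}(\mathcal{I})$.

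For the converse, suppose $B \subseteq \cup P$ and $|B \cap P_i| = k_i$ for all $i$. Then $B \in \mathcal{I}$ since $|B \cap P_i| = k_i \le k_i$. To see $B$ is maximal, suppose $B \subseteq B'$ with $B' \in \mathcal{I}$. Then for each $i$, $k_i = |B \cap P_i| \le |B' \cap P_i| \le k_i$, forcing $|B' \cap P_i| = |B \cap P_i|$; combined with $B \cap P_i \subseteq B' \cap P_i$ this gives $B \cap P_i = B' \cap P_i$ for every $i$. Since $B, B' \subseteq \cup P = E$ and the $P_i$ cover $E$, it follows that $B = \cup_i (B \cap P_i) = \cup_i (B' \cap P_i) = B'$. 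Hence $B \in \mathit{Max}(\mathcal{I}) = \mathcal{B}(M(P;k_1,\dots,k_m))$.

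The only mild subtlety — and the step I would state most carefully — is the repeated use of the pairwise disjointness of the partition blocks: it is what guarantees that modifying $B$ inside one block $P_j$ leaves $|B \cap P_i|$ unchanged for $i \ne j$, and it is what lets us reconstruct a subset of $\cup P$ from its intersections with the blocks. Everything else is a routine unwinding of definitions, so I do not expect any real obstacle here.
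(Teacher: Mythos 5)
Your proof is correct and follows essentially the same route as the paper's: both directions are handled by directly unwinding the definition of $\mathcal{I}(P;k_{1},\cdots,k_{m})$ and maximality, with the forward direction arguing by contradiction that a deficient block admits an augmenting element. The only differences are cosmetic — you verify maximality against an arbitrary independent superset where the paper checks single-element extensions, and you explicitly invoke $k_{j}\leq|P_{j}|$ to guarantee $P_{j}-B\neq\emptyset$, a point the paper leaves implicit.
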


\begin{proof}

($\Rightarrow$): It is obvious $B\subseteq \cup P$. Suppose there exists some $P_{i}\in P$ such that $|B\cap P_{i}|\neq k_{i}$. Then $|B\cap P_{i}|> k_{i}$ or $|B\cap P_{i}|< k_{i}$. If $|B\cap P_{i}|> k_{i}$, then $B\notin\mathcal{I}(M(P;k_{1},\cdots,k_{m}))$. It is contradictory. If $|B\cap P_{i}|< k_{i}$, then for any $b\in P_{i}-B$, $|(B\cup\{b\})\cap P_{j}|\leq k_{j}$, where $1\leq j\leq m$. Thus $B\cup\{b\}\in\mathcal{I}(M(P;k_{1},\cdots,k_{m}))$. It is contradictory.

($\Leftarrow$): By Proposition~\ref{proposition302}, we know that $B\in\mathcal{I}(M(P;k_{1},\cdots,k_{m}))$. For any $d\in E-B$, without loss of generality, we suppose $d\in P_{l}\in P$. Since $|B\cap P_{l}|=k_{l}$, then $|(B\cup\{d\})\cap P_{l}|=k_{l}+1$. Hence $B\cup\{d\}\notin\mathcal{I}(M(P;k_{1},\cdots,k_{m}))$. Therefore $B\in\mathcal{B}(M(P;k_{1},\cdots,k_{m}))$. $\Box$

\end{proof}

Now we consider a special type of partition matroids.

\begin{proposition}
\label{proposition304}
Let $E$ be a finite set, $P$ a partition on $\cup P$ and $\cup P\subseteq E$. Let $\mathcal{I}_{P}=\{X\subseteq \cup P||X\cap D|\leq1,D\in P\}$. Then $(E,\mathcal{I}_{P})$ is a matroid.
\end{proposition}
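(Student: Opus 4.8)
The plan is to verify that $\mathcal{I}_P$ satisfies the three independent set axioms (I1), (I2), (I3) from Definition~\ref{definition2}, since this is exactly what is required to conclude $(E,\mathcal{I}_P)$ is a matroid. Note first that $\mathcal{I}_P$ is a collection of subsets of $\cup P$, hence of $E$, so the ambient set condition is fine. For (I1), the empty set satisfies $|\emptyset\cap D|=0\leq1$ for every $D\in P$, so $\emptyset\in\mathcal{I}_P$. For (I2), if $I\in\mathcal{I}_P$ and $I'\subseteq I$, then $I'\subseteq\cup P$ and for every $D\in P$ we have $|I'\cap D|\leq|I\cap D|\leq1$, so $I'\in\mathcal{I}_P$. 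These two are immediate.

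The substantive step is (I3). Given $I_1,I_2\in\mathcal{I}_P$ with $|I_1|<|I_2|$, I would use the fact that $P$ partitions $\cup P$, so $I_1$ and $I_2$ are each transversals of subcollections of $P$: precisely, $I_1$ meets exactly $|I_1|$ blocks of $P$ (one element each) and $I_2$ meets exactly $|I_2|$ blocks. Since $|I_2|>|I_1|$, there must be some block $D_0\in P$ that $I_2$ meets but $I_1$ does not; let $e$ be the unique element of $I_2\cap D_0$. Then $e\in I_2-I_1$, and I claim $I_1\cup\{e\}\in\mathcal{I}_P$: it is contained in $\cup P$, and for $D=D_0$ we get $|(I_1\cup\{e\})\cap D_0|=1$ (since $I_1\cap D_0=\emptyset$), while for any other block $D\neq D_0$ we have $e\notin D$ (blocks are disjoint), so $|(I_1\cup\{e\})\cap D|=|I_1\cap D|\leq1$. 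Hence $I_1\cup\{e\}\in\mathcal{I}_P$, establishing (I3).

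The main obstacle—really the only point requiring care—is the counting argument that produces the block $D_0$: one must observe that because distinct elements of an independent set lie in distinct blocks (each block contributes at most one element), the map sending an element of an independent set to the block containing it is injective, so an independent set of size $k$ determines exactly $k$ blocks; then a pigeonhole comparison of the $|I_2|$ blocks hit by $I_2$ against the $|I_1|$ blocks hit by $I_1$ yields a block hit by $I_2$ but not $I_1$. Alternatively, one could simply invoke Proposition~\ref{proposition302} with the choice $k_i=1$ for all $i$ (and $E$ replaced by $\cup P$), observing that $\mathcal{I}_P=\mathcal{I}(P;1,\dots,1)$, and then note that enlarging the ground set from $\cup P$ to $E$ without adding the new elements to any independent set preserves the matroid axioms; I would mention this shorter route as well, but the direct verification above is self-contained and short enough to present in full.
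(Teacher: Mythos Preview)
Your proof is correct but takes a different route from the paper. You verify the independent-set axioms (I1)--(I3) directly, with a pigeonhole argument for (I3). The paper instead reduces immediately to Proposition~\ref{proposition302}: it extends $P$ to a partition $Q=\{D_1,\dots,D_t,D_{t+1}\}$ of all of $E$ by adjoining the block $D_{t+1}=E-\cup P$, sets $k_1=\cdots=k_t=1$ and $k_{t+1}=0$, and observes that $\mathcal{I}_P=\mathcal{I}(Q;k_1,\dots,k_{t+1})$, so $(E,\mathcal{I}_P)=M(Q;k_1,\dots,k_{t+1})$. Your alternative route at the end is in the same spirit, but you propose shrinking the ground set to $\cup P$ and then re-enlarging it, whereas the paper's device of adding a block with parameter $0$ keeps the ground set fixed and lands directly in the format of Proposition~\ref{proposition302}. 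Your direct verification buys independence from that earlier proposition at the cost of writing out the (easy) augmentation step; the paper's reduction is shorter but leans on Proposition~\ref{proposition302}.
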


\begin{proof}

Let $P=\{D_{1},D_{2},\cdots,D_{t}\}$, $D_{t+1}=E-\cup P$ and $Q=P\cup\{D_{t+1}\}$. Let $k_{i}=1$, where $1\leq i\leq t$ and $k_{t+1}=0$. Then $Q$ is a partition on $E$ and $\{X\subseteq\cup P||X\cap D_{i}|\leq1,1\leq i\leq t\}=\{X\subseteq E||X\cap D_{i}|\leq k_{i},1\leq i\leq t+1\}$. Hence $(E,\mathcal{I}_{P})=M(Q;k_{1},\cdots,k_{t+1})$. $\Box$
\end{proof}

\begin{definition}(Unique partition matroid)
\label{definition70}
Matroid $(E,\mathcal{I}_{P})$ is denoted as $M_{E}(P)$ and called a unique partition matroid.
\end{definition}

By Proposition~\ref{proposition303}, we have the following proposition.

\begin{proposition}
\label{proposition305}
$B\in\mathcal{B}(M_{E}(P))$ iff $B\subseteq\cup P$ and for any $K\in P$, it follows that $|B\cap K|=1$.
\end{proposition}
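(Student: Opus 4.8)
The statement to prove is Proposition~\ref{proposition305}: for a unique partition matroid $M_E(P)$, a set $B$ is a base iff $B \subseteq \cup P$ and $|B \cap K| = 1$ for every $K \in P$. Here is my plan.

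\textbf{Approach.} The plan is to derive this directly from Proposition~\ref{proposition303}, using the construction worked out in the proof of Proposition~\ref{proposition304}. That proof exhibits $M_E(P)$ as an ordinary partition matroid $M(Q; k_1, \ldots, k_{t+1})$, where $P = \{D_1, \ldots, D_t\}$, $D_{t+1} = E - \cup P$, $Q = P \cup \{D_{t+1}\}$, and the capacities are $k_i = 1$ for $1 \le i \le t$ and $k_{t+1} = 0$. So the idea is simply to translate the characterization of bases of $M(Q; k_1, \ldots, k_{t+1})$ given by Proposition~\ref{proposition303} into the language of $P$.

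\textbf{Key steps.} First I would recall the identification $M_E(P) = M(Q; k_1, \ldots, k_{t+1})$ from Proposition~\ref{proposition304}, so that $\mathcal{B}(M_E(P)) = \mathcal{B}(M(Q; k_1, \ldots, k_{t+1}))$. Second, I would apply Proposition~\ref{proposition303}: $B \in \mathcal{B}(M(Q; k_1, \ldots, k_{t+1}))$ iff $B \subseteq \cup Q$ and $|B \cap D_i| = k_i$ for every $1 \le i \le t+1$. Third, I would simplify this condition. Since $\cup Q = E$, the containment $B \subseteq \cup Q$ is automatic and carries no information; instead the content is hidden in the two capacity conditions. The condition $|B \cap D_{t+1}| = k_{t+1} = 0$ says exactly $B \cap (E - \cup P) = \emptyset$, i.e. $B \subseteq \cup P$. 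The remaining conditions $|B \cap D_i| = k_i = 1$ for $1 \le i \le t$ say exactly that $|B \cap K| = 1$ for every $K \in P$. Combining, $B \in \mathcal{B}(M_E(P))$ iff $B \subseteq \cup P$ and $|B \cap K| = 1$ for all $K \in P$, which is the claim.

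\textbf{Main obstacle.} There is essentially no hard step here; the proof is a bookkeeping translation. The only point requiring a moment of care is the handling of the extra block $D_{t+1} = E - \cup P$: one must notice that Proposition~\ref{proposition303}'s condition $B \subseteq \cup Q = E$ is vacuous, and that the real restriction $B \subseteq \cup P$ emerges from the zero-capacity condition $|B \cap D_{t+1}| = 0$ rather than from the containment clause. Once that is observed, the equivalence is immediate.
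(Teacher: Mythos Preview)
Your proposal is correct and, like the paper, reduces directly to Proposition~\ref{proposition303}. The paper's one-line proof takes a slightly different bookkeeping route --- it observes that $\mathcal{B}(M_E(P)) = \mathcal{B}(M_{\cup P}(P))$ and applies Proposition~\ref{proposition303} to $M_{\cup P}(P)$ (partition $P$, all capacities $1$) rather than invoking the extra zero-capacity block $D_{t+1}$ from Proposition~\ref{proposition304} --- but the idea is the same.
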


\begin{proof}
It is obvious $B\in\mathcal{B}(M_{E}(P))$ iff $B\in\mathcal{B}(M_{\cup P}(P))$. Again by Proposition~\ref{proposition303}, this proposition has been proved. $\Box$
\end{proof}

By contrast, we have a necessary and sufficient condition for a subset of $E$ to be a base of the dual matroid of a unique partition matroid $M_{E}(P)$.

\begin{proposition}
\label{proposition339}
$B\in\mathcal{B}^{*}(M_{E}(P))$ iff $E-\cup P\subseteq B\subseteq E$ and for any $K\in P$, it follows that $|K-B|=1$.
\end{proposition}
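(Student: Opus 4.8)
The plan is to reduce everything to Proposition~\ref{proposition305} by unwinding the definition of the dual matroid. First I would record the standard identification $\mathcal{B}^{*}(M)=Com(\mathcal{B}(M))$. By Definition~\ref{definition8}, $\mathcal{B}^{*}(M)=Max(Low(Com(\mathcal{B}(M))))$; since by Proposition~\ref{proposition100} all members of $\mathcal{B}(M)$ share the cardinality $r(M)$, all members of $Com(\mathcal{B}(M))$ share the cardinality $|E|-r(M)$. Hence every member of $Com(\mathcal{B}(M))$ is maximal in $Low(Com(\mathcal{B}(M)))$ (nothing in $Low(Com(\mathcal{B}(M)))$ is strictly larger), and conversely every maximal member of $Low(Com(\mathcal{B}(M)))$ is a subset of some member of $Com(\mathcal{B}(M))$ and therefore, by maximality, equals it. So $\mathcal{B}^{*}(M)=Com(\mathcal{B}(M))$, i.e. $B\in\mathcal{B}^{*}(M_{E}(P))$ iff $E-B\in\mathcal{B}(M_{E}(P))$.

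Next I would apply Proposition~\ref{proposition305} to the set $E-B$: we have $E-B\in\mathcal{B}(M_{E}(P))$ iff $E-B\subseteq\cup P$ and $|(E-B)\cap K|=1$ for every $K\in P$. Then I would translate these two conditions back into conditions on $B$. On one hand, $E-B\subseteq\cup P$ is equivalent to $E-\cup P\subseteq B$, and together with the always-true inclusion $B\subseteq E$ this gives exactly $E-\cup P\subseteq B\subseteq E$. On the other hand, since each $K\in P$ satisfies $K\subseteq\cup P\subseteq E$, we have $(E-B)\cap K=K-B$, so $|(E-B)\cap K|=1$ is precisely $|K-B|=1$. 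Chaining these equivalences yields the stated characterization.

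I do not expect a genuine obstacle here; the only step that deserves some care is justifying $\mathcal{B}^{*}(M)=Com(\mathcal{B}(M))$ directly from Definition~\ref{definition8} rather than taking it for granted, and that is exactly where Proposition~\ref{proposition100} is needed. The remainder is elementary set algebra layered on top of Proposition~\ref{proposition305}.
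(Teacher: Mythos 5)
Your proof is correct and follows essentially the same route as the paper's: both reduce the statement to Proposition~\ref{proposition305} via the identification $B\in\mathcal{B}^{*}(M_{E}(P))\Leftrightarrow E-B\in\mathcal{B}(M_{E}(P))$ and then translate the two conditions by elementary set algebra. The only difference is that you explicitly derive $\mathcal{B}^{*}(M)=Com(\mathcal{B}(M))$ from Definition~\ref{definition8} and Proposition~\ref{proposition100}, whereas the paper uses this standard fact implicitly; that added care is fine but does not change the argument.
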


\begin{proof}

($\Rightarrow$): By $B\in\mathcal{B}^{*}(M_{E}(P))$, we have that $E-B\subseteq\mathcal{B}(M_{E}(P))$. Thus $E-B\subseteq\cup P$, hence $E-\cup P\subseteq B$. It is obvious $B\subseteq E$. Therefore $E-\cup P\subseteq B\subseteq E$. For any $K\in P$, by Proposition~\ref{proposition305}, we have that $|K-B|=|(E\cap K)-(B\cap K)|=|(E-B)\cap K|=1$.

($\Leftarrow$): By $E-\cup P\subseteq B\subseteq E$, we have that $E-B\subseteq\cup P$. For any $K\in P$, we have that $|(E-B)\cap K|=|(E\cap K)-(B\cap K)|=|K-B|=1$. By Proposition~\ref{proposition305}, we have that $E-B\in\mathcal{B}(M_{E}(P))$. Then $B\in\mathcal{B}^{*}(M_{E}(P))$. $\Box$

\end{proof}

Now we give the expression of the base family of a unique partition matroid. It indicates that any base of a unique partition matroid can be obtained by selecting one and only one element from every block of the given partition.

\begin{proposition}
\label{proposition306}
Let $P=\{D_{1},D_{2},\cdots,D_{t}\}$ be a partition on $\cup P$. Then $\mathcal{B}(M_{E}(P))=\{\{b_{1},b_{2},\cdots,b_{t}\}|b_{i}\in D_{i},1\leq i\leq t\}$.
\end{proposition}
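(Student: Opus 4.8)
The plan is to apply Proposition~\ref{proposition305}, which already characterizes $\mathcal{B}(M_{E}(P))$ as the family of all $B$ with $B\subseteq\cup P$ and $|B\cap K|=1$ for every $K\in P$; so the task reduces to checking that this characterization coincides with the explicitly described family $H=\{\{b_{1},b_{2},\cdots,b_{t}\}\mid b_{i}\in D_{i},\,1\leq i\leq t\}$. I would prove the two inclusions $\mathcal{B}(M_{E}(P))\subseteq H$ and $H\subseteq\mathcal{B}(M_{E}(P))$ separately, using in an essential way that $P=\{D_{1},D_{2},\cdots,D_{t}\}$ is a \emph{partition} of $\cup P$, i.e.\ the $D_{i}$ are pairwise disjoint with union $\cup P$.

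For $\mathcal{B}(M_{E}(P))\subseteq H$: take $B\in\mathcal{B}(M_{E}(P))$. By Proposition~\ref{proposition305}, for each $i$ we have $|B\cap D_{i}|=1$, so we may write $B\cap D_{i}=\{b_{i}\}$ with $b_{i}\in D_{i}$. Since $B\subseteq\cup P=D_{1}\cup\cdots\cup D_{t}$ and the $D_{i}$ are pairwise disjoint, $B=\bigcup_{i=1}^{t}(B\cap D_{i})=\{b_{1},b_{2},\cdots,b_{t}\}$, which lies in $H$. (One should note the $b_{i}$ are automatically distinct because they come from disjoint blocks, so the set really has $t$ elements, but this is not needed for membership in $H$ as written.) For $H\subseteq\mathcal{B}(M_{E}(P))$: take any $\{b_{1},b_{2},\cdots,b_{t}\}\in H$ with $b_{i}\in D_{i}$. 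Clearly $\{b_{1},\cdots,b_{t}\}\subseteq\cup P$. Fix $K\in P$, say $K=D_{j}$. Then $b_{j}\in\{b_{1},\cdots,b_{t}\}\cap D_{j}$, and for $i\neq j$ we have $b_{i}\in D_{i}$ which is disjoint from $D_{j}$, so $b_{i}\notin D_{j}$; hence $\{b_{1},\cdots,b_{t}\}\cap D_{j}=\{b_{j}\}$ has cardinality $1$. By Proposition~\ref{proposition305}, $\{b_{1},\cdots,b_{t}\}\in\mathcal{B}(M_{E}(P))$. Combining the two inclusions gives the claimed equality.

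I do not anticipate a genuine obstacle here; the statement is essentially a reformulation of Proposition~\ref{proposition305} once one unwinds what it means for a subset of a partitioned ground set to meet each block exactly once. The only point requiring a little care is the repeated use of pairwise disjointness of the blocks $D_{i}$ — both to reconstruct $B$ as the $t$-element set $\{b_{1},\cdots,b_{t}\}$ in the first inclusion and to verify $|\{b_{1},\cdots,b_{t}\}\cap D_{j}|=1$ in the second — and keeping straight that $\cup P\subseteq E$ so that such a $B$ is indeed a subset of the ground set of $M_{E}(P)$. Everything else is bookkeeping.
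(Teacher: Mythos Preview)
Your proposal is correct and follows essentially the same approach as the paper: both arguments invoke Proposition~\ref{proposition305} and verify the two inclusions $\mathcal{B}(M_{E}(P))\subseteq H$ and $H\subseteq\mathcal{B}(M_{E}(P))$ using the pairwise disjointness of the blocks $D_{i}$. Your first inclusion is marginally cleaner (writing $B=\bigcup_{i}(B\cap D_{i})$ directly rather than arguing by contradiction that $B\setminus\{b_{1},\dots,b_{t}\}=\emptyset$), but the substance is identical.
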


\begin{proof}

Let $W=\{\{b_{1},b_{2},\cdots,b_{t}\}|b_{i}\in D_{i},1\leq i\leq t\}$. For any $B\in\mathcal{B}(M_{E}(P))$ and any $D_{i}\in P$, where $1\leq i\leq t$, by Proposition~\ref{proposition305}, we have that $|B\cap D_{i}|=1$. Suppose $B\cap D_{i}=\{b_{i}\}$. Then $\{b_{1},b_{2},\cdots,b_{t}\}\subseteq B$. Suppose $B-\{b_{1},b_{2},\cdots,b_{t}\}\neq\emptyset$ and $a\in B-\{b_{1},b_{2},\cdots,b_{t}\}$. By Proposition~\ref{proposition305}, we know that $B\subseteq\cup P$. Then there exists some $D_{j}\in P$ such that $a\in D_{j}$. Thus $a\in B\cap D_{j}$, hence $a=b_{j}$. It is contradictory. Thus $B=\{b_{1},b_{2},\cdots,b_{t}\}$, therefore $\mathcal{B}(M_{E}(P))\subseteq W$. For any $\{c_{1},c_{2},\cdots,c_{t}\}\in W$, it is obvious $\{c_{1},c_{2},\cdots,c_{t}\}\subseteq \cup P$ and for any $D_{i}\in P$, $|\{c_{1},c_{2},\cdots,c_{t}\}\cap D_{i}|=1$. By Proposition~\ref{proposition305}, we have that $\{c_{1},c_{2},\cdots,c_{t}\}\in\mathcal{B}(M_{E}(P))$. Hence $W\subseteq\mathcal{B}(M_{E}(P))$, thus $\mathcal{B}(M_{E}(P))=W=\{\{b_{1},b_{2},\cdots,b_{t}\}|b_{i}\in D_{i},1\leq i\leq t\}$. $\Box$

\end{proof}

By the above proposition, we obtain the following corollary.

\begin{corollary}
\label{corollary336}
$\cup\mathcal{B}(M_{E}(P))=\cup P$.
\end{corollary}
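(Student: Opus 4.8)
\textbf{Proof proposal for Corollary~\ref{corollary336}.}
The plan is to derive the identity directly from the explicit description of the base family given in Proposition~\ref{proposition306}. Write $P=\{D_{1},D_{2},\cdots,D_{t}\}$, a partition on $\cup P$, so that $\mathcal{B}(M_{E}(P))=\{\{b_{1},b_{2},\cdots,b_{t}\}\mid b_{i}\in D_{i},1\leq i\leq t\}$. I will prove the two inclusions $\cup\mathcal{B}(M_{E}(P))\subseteq\cup P$ and $\cup P\subseteq\cup\mathcal{B}(M_{E}(P))$ separately.

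For the first inclusion, let $g\in\cup\mathcal{B}(M_{E}(P))$. Then $g\in B$ for some $B\in\mathcal{B}(M_{E}(P))$, and by Proposition~\ref{proposition306} we may write $B=\{b_{1},b_{2},\cdots,b_{t}\}$ with $b_{i}\in D_{i}$; hence $g=b_{j}\in D_{j}\subseteq\cup P$ for some $j$. This gives $\cup\mathcal{B}(M_{E}(P))\subseteq\cup P$. (Alternatively this inclusion is immediate from Proposition~\ref{proposition305}, since every base is a subset of $\cup P$.)

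For the second inclusion, let $h\in\cup P$. Then $h\in D_{i}$ for some $i$. Since $P$ is a partition, condition $(1)$ of Definition~\ref{definition68} gives $D_{j}\neq\emptyset$ for every $1\leq j\leq t$, so we may pick $b_{j}\in D_{j}$ for each $j\neq i$ and set $b_{i}=h$. By Proposition~\ref{proposition306}, the set $\{b_{1},b_{2},\cdots,b_{t}\}$ is a base of $M_{E}(P)$, and it contains $h$. Therefore $h\in\cup\mathcal{B}(M_{E}(P))$, which yields $\cup P\subseteq\cup\mathcal{B}(M_{E}(P))$. Combining the two inclusions gives $\cup\mathcal{B}(M_{E}(P))=\cup P$, as required.

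There is no real obstacle here: the only point that needs care is invoking the nonemptiness of each partition block when constructing a base through a prescribed point $h$, and this is guaranteed by the definition of a partition. The corollary is thus an easy bookkeeping consequence of Proposition~\ref{proposition306}. $\Box$
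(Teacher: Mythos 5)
Your proof is correct and follows exactly the route the paper intends: the paper states this corollary without a written proof, simply as a consequence of Proposition~\ref{proposition306}, and your two-inclusion argument (using the nonemptiness of partition blocks to build a base through any prescribed point) is just the natural elaboration of that same derivation.
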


By Proposition~\ref{proposition306}, we can prove that the forming base family of a unique partition matroid is just the partition which induces the matroid.

\begin{theorem}
\label{theorem321}
$Q(M_{E}(P))=P$.
\end{theorem}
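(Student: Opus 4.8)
The plan is to show the two set-families $F(M_E(P))$ and $P$ coincide by double inclusion, leaning on the explicit description of the base family of a unique partition matroid given in Proposition~\ref{proposition306} together with Corollary~\ref{corollary336}. Throughout write $P=\{D_1,D_2,\dots,D_t\}$ and $M=M_E(P)$; note that by Corollary~\ref{corollary336} we have $\cup\mathcal{B}(M)=\cup P$, and that by Proposition~\ref{proposition305} a base of $M$ is exactly a set that meets each $D_i$ in a single point, so in particular $r(M)=t$. (The trivial case $t=0$, i.e. $\cup P=\emptyset$, forces $r(M)=0$ and $F(M)$ undefined in the sense of Definition~\ref{definition77}; one should presumably assume $t\geq 1$, matching the standing hypothesis $r(M)>0$ in the definition of the forming base family.)

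First I would establish $P\subseteq F(M)$. Fix a block $D_i\in P$. Pick one element $b_j\in D_j$ for every $j\neq i$ and set $A=\{b_j: j\neq i\}$; by Proposition~\ref{proposition306} any superset of $A$ obtained by adding a point of $D_i$ is a base, so $A$ is an independent set of size $t-1=r(M)-1$, i.e. $A\in s(M)$. I then claim $K_M(A)=D_i$: for $a\in D_i$ the set $A\cup\{a\}$ meets every block once, hence is a base by Proposition~\ref{proposition305}, so $r(A\cup\{a\})=t=r(A)+1$ and $a\in K_M(A)$; conversely if $a\notin D_i$ then $a$ lies in some $D_j$ with $j\neq i$, and $a\neq b_j$ would give a set meeting $D_j$ twice (not independent) while $a=b_j\in A$ is impossible, so no such $a$ contributes and $K_M(A)\cap(\cup P\setminus D_i)=\emptyset$; together with $K_M(A)\subseteq\cup\mathcal{B}(M)=\cup P$ this yields $K_M(A)=D_i$. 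Hence $D_i=K_M(A)\in F(M)$.

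Next I would establish $F(M)\subseteq P$. Take any $K_M(A)\in F(M)$ with $A\in s(M)$, so $|A|=t-1$ and $A$ meets every block of $P$ in at most one point (being independent), hence $A$ misses exactly one block, say $D_i$, and $A$ meets each $D_j$, $j\neq i$, exactly once. For $a\in K_M(A)$, $A\cup\{a\}$ is a base, so by Proposition~\ref{proposition305} it meets $D_i$ exactly once; since $A$ misses $D_i$, this forces $a\in D_i$. Thus $K_M(A)\subseteq D_i$, and the reverse inclusion $D_i\subseteq K_M(A)$ is the same computation as above (adding any point of $D_i$ to $A$ produces a set meeting every block once, hence a base). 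Therefore $K_M(A)=D_i\in P$, completing the inclusion and hence the proof.

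The only mildly delicate point is the bookkeeping in the claim $K_M(A)=D_i$ — one must argue both that every point of $D_i$ is ``addable'' to the chosen secondary base (using Proposition~\ref{proposition305} to certify the resulting set is a base) and that no point outside $D_i$ is, i.e. that $A$ already ``saturates'' the other blocks; everything else is routine set manipulation and a cardinality count. Note also that the example in the excerpt writes $Q_{M}$ / $Q(M_E(P))$ where the definitions use $F(M)$; I would keep the notation $F(M)$ consistent with Definition~\ref{definition77}.
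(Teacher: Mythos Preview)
Your proof is correct and follows essentially the same approach as the paper's: both argue by double inclusion, using Proposition~\ref{proposition306} (or equivalently Proposition~\ref{proposition305}) to identify every secondary base $A$ as a transversal of all blocks of $P$ but one, say $D_j$, and then to verify $K_M(A)=D_j$. Your write-up is in fact more careful than the paper's, which asserts ``$V=K_M(A)=D_j$'' without spelling out the two inclusions for $K_M(A)$; your remark on the $Q$ versus $F$ notational inconsistency is also apt.
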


\begin{proof}

Let $P=\{D_{1},D_{2},\cdots,D_{t}\}$. For any $V\in Q(M_{E}(P))$, we know that there exists some $A\in F_{M_{E}(P)}$ such that $V=K_{M}(A)$. By Proposition~\ref{proposition306}, we know that there exists some $j$, where $1\leq j\leq t$, such that $A=\{b_{i_{1}},b_{i_{2}},\cdots,b_{i_{t-1}}\}$, where $\{i_{1},i_{2},\cdots,i_{t-1}\}=\{1,2,\cdots,t\}-\{j\}$ and for any $i_{s}\in\{i_{1},i_{2},\cdots,i_{t-1}\}$, it follows that $b_{i_{s}}\in D_{i_{s}}$. Again by Proposition~\ref{proposition306}, we know that $V=K_{M}(A)=D_{j}$. Hence $V\in P$, therefore $Q(M_{E}(P))\subseteq P$. For any $D_{j}\in P$, where $1\leq j\leq t$, we know that $D_{j}=K_{M_{E}(P)}(\{b_{i_{1}},b_{i_{2}},\cdots,b_{i_{t-1}}\})$, where $\{i_{1},i_{2},\cdots,i_{t-1}\}=\{1,2,\cdots,t\}-\{j\}$ and for any $i_{s}\in\{i_{1},i_{2},\cdots,i_{t-1}\}$, it follows that $b_{i_{s}}\in D_{i_{s}}$. Hence $D_{j}\in Q(M_{E}(P))$, therefore $P\subseteq Q(M_{E}(P))$. Then $Q(M_{E}(P))=P$. $\Box$

\end{proof}

Now we can answer the question that whether or not there exist some existing matroids which are unique expansion matroids.

\begin{theorem}
\label{theorem52}
$M$ is a unique expansion matroid iff $M$ is a unique partition matroid.
\end{theorem}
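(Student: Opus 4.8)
The plan is to establish the equivalence in both directions, leaning heavily on the characterization results already proved. For the direction ($\Leftarrow$), suppose $M$ is a unique partition matroid, say $M = M_E(P)$ for some set $E$ and some partition $P$ of $\cup P \subseteq E$. By Theorem~\ref{theorem321} we have $F(M) = P$, so in particular $F(M)$ is a partition on $\cup P$; and by Corollary~\ref{corollary336}, $\cup P = \cup\mathcal{B}(M)$, so $F(M)$ is a partition on $\cup\mathcal{B}(M)$. Applying Theorem~\ref{theorem50} then gives immediately that $M$ is a unique expansion matroid.

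For the direction ($\Rightarrow$), suppose $M$ is a unique expansion matroid. The natural candidate partition is $F(M)$ itself: by Theorem~\ref{theorem50}, $F(M)$ is a partition on $\cup\mathcal{B}(M)$, so $F(M)$ is a legitimate partition of $\cup F(M) \subseteq E(M)$ and the unique partition matroid $M_{E(M)}(F(M))$ is defined. The goal is then to show $M = M_{E(M)}(F(M))$; since both have the same ground set $E(M)$, it suffices to show they have the same base family. Proposition~\ref{proposition51} tells us that $B \in \mathcal{B}(M)$ iff $B \subseteq \cup\mathcal{B}(M)$ and $|B \cap D| = 1$ for every $D \in F(M)$; Proposition~\ref{proposition305} tells us that $B \in \mathcal{B}(M_{E(M)}(F(M)))$ iff $B \subseteq \cup F(M)$ and $|B \cap K| = 1$ for every $K \in F(M)$. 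By Proposition~\ref{proposition46}, $\cup F(M) = \cup\mathcal{B}(M)$, so these two conditions are literally identical, whence $\mathcal{B}(M) = \mathcal{B}(M_{E(M)}(F(M)))$ and the two matroids coincide.

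I do not expect a genuine obstacle here, since all the heavy lifting was done in Theorem~\ref{theorem50} (the partition criterion for unique expansion), Proposition~\ref{proposition51} (base criterion for unique expansion matroids), Theorem~\ref{theorem321} (forming base family of a unique partition matroid) and Proposition~\ref{proposition305} (base criterion for unique partition matroids). The only mild care needed is bookkeeping about ground sets: a unique partition matroid $M_E(P)$ carries an ambient set $E$ possibly larger than $\cup P$, so in the ($\Rightarrow$) direction one must be explicit that the ambient set is taken to be $E(M)$ and check that the non-loop part $\cup F(M)$ equals $\cup\mathcal{B}(M)$ via Proposition~\ref{proposition46}, so that Definition~\ref{definition70} applies and the base families match exactly. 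Once this is set up, both inclusions are immediate from the cited characterizations.
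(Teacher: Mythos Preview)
Your proposal is correct and follows essentially the same route as the paper: for ($\Leftarrow$) the paper also invokes Theorem~\ref{theorem321} together with Theorem~\ref{theorem50}, and for ($\Rightarrow$) it likewise takes $F(M)$ as the partition and verifies that $M$ coincides with $M_{E(M)}(F(M))$. The only cosmetic difference is that the paper checks equality of the independent-set families (via Proposition~\ref{proposition125}) whereas you check equality of the base families (via Proposition~\ref{proposition51} and Proposition~\ref{proposition305}); your version is arguably cleaner, since matching base families is sufficient and the two-sided inclusion is explicit.
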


\begin{proof}

($\Rightarrow$): By Theorem~\ref{theorem50}, we know that $F(M)$ is a partition on $\cup\mathcal{B}(M)$. For any $B\in\mathcal{B}(M)$, by Proposition~\ref{proposition125}, we know that $B\subseteq\cup F(M)$ and for any $D\in F(M)$, it follows that $|B\cap D|=1$. Then for any $X\in\mathcal{I}(M)$, we have that $X\subseteq\cup F(M)$ and for any $D\in F(M)$, it follows that $|X\cap D|\leq1$. Thus $\mathcal{I}(M)=\{X\subseteq\cup F(M)||X\cap D|\leq1,D\in F(M)\}$, therefore $M$ is a unique partition matroid.

($\Leftarrow$): It follows from Theorem~\ref{theorem321} and Theorem~\ref{theorem50}. $\Box$

\end{proof}

\section{Union minimal matroid}
\label{Union minimal matroid}

In this section, we propose the concepts of union minimal matroid and intersection minimal matroid, which are collectively called minimal matroid. We will prove that these two types of minimal matroids are dual and unique expansion matroids are union minimal matroids. For a given matroid, if we remove some bases from the base family and keep the union of the base family of the matroid unchanged, can the remainder be the base family of another matroid? The following example indicates sometimes it can, and sometimes it can not.

\begin{example}
\label{example333}
Let $E=\{1,2,3\}$, $\mathcal{B}_{1}=\{\{1,2\},\{1,3\},\{2,3\}\}$ and $M_{1}=(E,Low
(\mathcal{B}_{1}))$. Then $M_{1}$ is a matroid. Remove $\{2,3\}$ from $\{\{1,2\},\{1,3\},\{2,3\}\}$, we obtain $\{\{1,2\},\{\\
1,3
\}\}$. Let $\mathcal{B}_{2}=\{\{1,2\},\{1,3\}\}$ and $M_{2}=(E,Low(\mathcal{B}_{2}))$. Then $M_{2}$ is still a matroid and $\cup\mathcal{B}_{1}=\cup\mathcal{B}_{2}$. But remove any base from $\mathcal{B}(M_{2})$, the remainder is not the base family of any matroid $M$ which satisfies $\cup\mathcal{B}(M)=\cup\mathcal{B}(M_{2})$.
\end{example}

Based on this, we propose the following concept.

\begin{definition}(Union minimal matroid)
\label{definition37}
Let $M(E,\mathcal{I})$ be a matroid. For any matroid $M_{1}(E,\mathcal{I}_{1})$, if by $\cup\mathcal{B}(M_{1})=\cup\mathcal{B}(M)$ and $\mathcal{B}(M_{1})\subseteq\mathcal{B}(M)$, we obtain that  $\mathcal{B}(M_{1})=\mathcal{B}(M)$, $M$ is called a union minimal matroid.
\end{definition}

The following proposition presents a property of the isomorphism of union minimal matroids.

\begin{proposition}
\label{proposition42}
There exist two matroids $M_{1}(E,\mathcal{I}_{1})$ and $M_{2}(E,\mathcal{I}_{2})$, which satisfy that both $M_{1}$ and $M_{2}$ are union minimal matroids, $\cup\mathcal{B}(M_{1})=\cup\mathcal{B}(M_{2})$ and $r(M_{1})=r(M_{2})$, but $M_{1}$ and $M_{2}$ are not isomorphic.
\end{proposition}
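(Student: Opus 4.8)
The plan is to prove Proposition~\ref{proposition42} by exhibiting a concrete example: two non-isomorphic matroids on the same ground set $E$ that are both union minimal, have the same base union, and have the same rank. Since the statement is purely existential, no general argument is needed — a well-chosen pair of matroids suffices. The most natural candidates are unique partition matroids $M_E(P)$, because by Theorem~\ref{theorem52} these are exactly the unique expansion matroids, and the forthcoming (assumed) result that unique expansion matroids are union minimal will hand us the union-minimality for free.

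First I would fix a ground set, say $E = \{1,2,3,4\}$, and choose two partitions of $E$ with the same number of blocks but different block-size profiles — for instance $P_1 = \{\{1\},\{2\},\{3,4\}\}$ and $P_2 = \{\{1,2\},\{3\},\{4\}\}$, or to make the asymmetry sharper, $E=\{1,\dots,5\}$ with $P_1=\{\{1,2\},\{3,4\},\{5\}\}$ and $P_2=\{\{1\},\{2\},\{3,4,5\}\}$. Set $M_1 = M_E(P_1)$ and $M_2 = M_E(P_2)$. Then by Proposition~\ref{proposition304} both are matroids, and by Corollary~\ref{corollary336} we have $\cup\mathcal{B}(M_i) = \cup P_i = E$, so $\cup\mathcal{B}(M_1) = \cup\mathcal{B}(M_2)$; since $|P_1| = |P_2| = t$, Proposition~\ref{proposition306} gives $r(M_1) = r(M_2) = t$.

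Next I would establish union-minimality of each $M_i$. By Theorem~\ref{theorem52}, $M_i$ is a unique expansion matroid, and by the result of Section~\ref{Union minimal matroid} (the assumed theorem that unique expansion matroids are union minimal matroids) each $M_i$ is a union minimal matroid. If one prefers a self-contained argument at this point in the paper, one can instead argue directly from Proposition~\ref{proposition306}: any matroid $M$ with $\mathcal{B}(M)\subseteq\mathcal{B}(M_E(P))$ and $\cup\mathcal{B}(M)=\cup P$ must, by the exchange property (B2) of Theorem~\ref{theorem4}, contain every selector $\{b_1,\dots,b_t\}$ once it contains one, hence equals $\mathcal{B}(M_E(P))$ — but citing the section's main theorem is cleaner.

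Finally I would show $M_1 \not\cong M_2$. The cleanest invariant is the multiset of block sizes recovered from the forming base family: by Theorem~\ref{theorem321}, $F(M_i) = P_i$, and since any matroid isomorphism carries $F$ to $F$ (it is defined purely via rank), isomorphic matroids would have $F(M_1)$ and $F(M_2)$ related by a bijection of $E$, forcing the two partitions to have the same multiset of block sizes — which they do not by construction. Alternatively, one can avoid invoking isomorphism-invariance of $F$ and instead count, say, the number of parallel pairs (two-element dependent sets) or the sizes of the rank-one flats: in $M_E(P)$ the nonsingleton parallel classes correspond exactly to the blocks of $P$ of size $\geq 2$, and one checks the two chosen partitions differ there. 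The main obstacle is essentially bookkeeping: making sure the chosen partitions genuinely yield non-isomorphic matroids (equal block counts but distinct size profiles) while still sharing ground set, base union, and rank — there is no deep difficulty, only the need to pick the example correctly and to cite the right structural facts ($F(M_E(P))=P$ and union-minimality of unique expansion matroids) rather than reprove them.
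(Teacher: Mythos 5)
Your overall strategy is the same as the paper's: the statement is existential, and the paper also just exhibits a concrete pair (its $M_{1}$ and $M_{2}$ are in fact the unique partition matroids on $\{1,2,3,4,5\}$ induced by $\{\{1\},\{2,3,4\}\}$ and $\{\{1,2\},\{3,4\}\}$, with $5$ a loop), distinguishing them by $\cap\mathcal{B}(M_{1})=\{1\}$ versus $\cap\mathcal{B}(M_{2})=\emptyset$ rather than by your block-size-profile invariant. Both invariants are legitimate, and your observation that the nonsingleton parallel classes of $M_{E}(P)$ are exactly the blocks of $P$ of size at least $2$ is correct and gives a clean isomorphism test.

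There is, however, a concrete error in your first proposed example: the partitions $P_{1}=\{\{1\},\{2\},\{3,4\}\}$ and $P_{2}=\{\{1,2\},\{3\},\{4\}\}$ do \emph{not} have different block-size profiles --- both are $\{1,1,2\}$ as multisets --- so the bijection $1\mapsto 3$, $2\mapsto 4$, $3\mapsto 1$, $4\mapsto 2$ carries $P_{1}$ to $P_{2}$ and the two matroids are isomorphic. That pair fails your own stated criterion and must be discarded; your second pair, $P_{1}=\{\{1,2\},\{3,4\},\{5\}\}$ and $P_{2}=\{\{1\},\{2\},\{3,4,5\}\}$ on $E=\{1,\dots,5\}$, does work (profiles $\{2,2,1\}$ versus $\{1,1,3\}$), so the proof goes through once you commit to it. One further caution: Proposition~\ref{proposition42} is stated before Theorem~\ref{theorem552} in the paper, so invoking ``unique expansion matroids are union minimal'' here is a forward reference; you flag this yourself, and your fallback --- arguing union-minimality directly from Proposition~\ref{proposition306} via the exchange property, or equivalently from the count $|\mathcal{B}(M_{E}(P))|=\prod_{K\in P}|K|$ --- is the right fix, though as written it is only a sketch and would need to be carried out.
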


\begin{proof}

We need only to give an example which satisfies the hypothesis given in this proposition. Let $E=\{1,2,3,4,5\}$, $\mathcal{I}_{1}=Low\{\{1,2\},\{1,3\},\{1,4\}\}$, $\mathcal{I}_{2}=Low\{\{
1,3\\
\},\{1,4\},\{2,3\},\{2,4\}\}$, $M_{1}=(E,\mathcal{I}_{1})$ and
$M_{2}=(E,\mathcal{I}_{2})$. Then both $M_{1}$ and $M_{2}$ are union minimal matroids. It is obvious  $\cup\mathcal{B}(M_{1})=\{1,2,3,4\}=\cup\mathcal{B}(M_{2})$ and $r(M_{1})=2=r(M_{2})$. But by $\cap\mathcal{B}(M_{1})=\{1\}$ and $\cap\mathcal{B}(M_{2})=\emptyset$, we know that $M_{1}$ and $M_{2}$ are not isomorphic. $\Box$

\end{proof}

By contrast, we propose the following concept.

\begin{definition}(Intersection minimal matroid)
\label{definition111}
Let $M(E,\mathcal{I})$ be a matroid. For any matroid $M_{1}(E,\mathcal{I}_{1})$, if by  $\cap\mathcal{B}(M_{1})=\cap\mathcal{B}(M)$ and $\mathcal{B}(M_{1})\subseteq\mathcal{B}(M)$, we obtain that  $\mathcal{B}(M_{1})=\mathcal{B}(M)$, $M$ is called an intersection minimal matroid.
\end{definition}

The following proposition presents a property of the isomorphism of intersection minimal matroids.

\begin{proposition}
\label{proposition42}
There exist two matroids $M_{1}(E,\mathcal{I}_{1})$ and $M_{2}(E,\mathcal{I}_{2})$, which satisfy that both $M_{1}$ and $M_{2}$ are intersection union minimal matroids, $\cap\mathcal{B}(M_{1})=\cap\mathcal{B}(M_{2})$ and $r(M_{1})=r(M_{2})$, but $M_{1}$ and $M_{2}$ are not isomorphic.
\end{proposition}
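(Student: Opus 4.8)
The statement to prove is Proposition~\ref{proposition42} (the second one, about intersection minimal matroids): there exist two matroids $M_1, M_2$ on a common ground set that are both intersection minimal, with equal base intersections and equal ranks, but which are not isomorphic. This is a mere existence claim, so the plan is simply to exhibit a concrete example and verify the four required properties by hand.

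The plan is to reuse the duality between union minimal and intersection minimal matroids that the section advertises (the dual of a union minimal matroid is intersection minimal), together with the example already constructed in the first Proposition~\ref{proposition42}. First I would take the two union minimal matroids $M_1, M_2$ from that earlier example (with $E = \{1,2,3,4,5\}$, $\mathcal{I}_1 = Low\{\{1,2\},\{1,3\},\{1,4\}\}$ and $\mathcal{I}_2 = Low\{\{1,3\},\{1,4\},\{2,3\},\{2,4\}\}$) and pass to their dual matroids $M_1^*, M_2^*$. By the (forthcoming) duality result, $M_1^*$ and $M_2^*$ are intersection minimal. Alternatively, and perhaps more safely since the duality theorem has not literally been stated yet in the excerpt, I would just write down a fresh explicit pair of matroids and check everything directly, since nothing here is deep.

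The key steps, in order, are: (i) display the two matroids explicitly via their base families (or independent-set families through $Low$), making sure each really is a matroid by checking (B1)--(B2) of Theorem~\ref{theorem4}, or by recognizing them as partition matroids via Proposition~\ref{proposition302}; (ii) compute $\cap\mathcal{B}(M_1)$ and $\cap\mathcal{B}(M_2)$ and check they are equal (e.g. both empty); (iii) compute $r(M_1)$ and $r(M_2)$ and check equality; (iv) verify intersection minimality of each $M_i$, i.e. that deleting any single base destroys the axioms (B1)--(B2) or changes the base intersection --- here one checks that for every base $B \in \mathcal{B}(M_i)$, $\mathcal{B}(M_i) - \{B\}$ is not the base family of a matroid with the same base intersection; and (v) exhibit an isomorphism invariant that distinguishes them. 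For step (v) the cleanest invariants are the multiset of base sizes together with something finer: since the earlier proof distinguished $M_1$ from $M_2$ by $\cap\mathcal{B}(M_1) = \{1\} \neq \emptyset = \cap\mathcal{B}(M_2)$, but here we need equal intersections, I would instead distinguish by, say, the number of bases, the size of $\cup\mathcal{B}$, or the isomorphism type of the restriction to $\cup\mathcal{B}$ --- e.g. one matroid has a coloop (an element in every base) and the other does not, or the lattices of flats differ.

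The main obstacle is step (v) combined with step (iv): I must cook up the example so that it simultaneously has trivial (or at least equal) base intersection, is genuinely intersection minimal, and yet carries a visible non-isomorphism. The tension is that the natural minimal examples (like $M_1$ above) achieve minimality precisely by having a coloop, which forces a nonempty intersection; to force equal intersections I will likely need slightly larger ground sets or to add an isolated loop/coloop uniformly to both sides so the intersections match while an internal structural difference (different cycle structure, or $M_1$ being connected versus $M_2$ being a direct sum) still blocks isomorphism. Once a good candidate pair is chosen, every verification is a finite, routine check against Theorem~\ref{theorem4} and Definition~\ref{definition111}, so the real work is purely in the choice of example rather than in any argument.
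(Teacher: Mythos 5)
Your plan is essentially the paper's: the paper simply exhibits $\mathcal{B}(M_1)=\{\{2,3\},\{2,4\},\{3,4\}\}$ and $\mathcal{B}(M_2)=\{\{1,3\},\{1,4\},\{2,3\},\{2,4\}\}$ on $E=\{1,2,3,4,5\}$, both with empty base intersection and rank $2$, and distinguishes them by $|\cup\mathcal{B}(M_1)|=3\neq 4=|\cup\mathcal{B}(M_2)|$, which is one of the invariants you list. One small caution: your suggested ``has a coloop vs.\ does not'' invariant cannot work here, since the coloops of $M$ are exactly the elements of $\cap\mathcal{B}(M)$, which the statement requires to be equal for the two matroids.
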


\begin{proof}

We need only to give an example which satisfies the hypothesis given in this proposition. Let $E=\{1,2,3,4,5\}$, $\mathcal{I}_{1}=Low\{\{2,3\},\{2,4\},\{3,4\}\}$, $\mathcal{I}_{2}=Low\{\{1,3\\
\},\{1,4\},\{2,3\},\{2,4\}\}$,
$M_{1}=(E,\mathcal{I}_{1})$ and
$M_{2}=(E,\mathcal{I}_{2})$. Then both $M_{1}$ and $M_{2}$ are intersection minimal matroids. It is obvious $\cap\mathcal{B}(M_{1})=\cap\mathcal{B}(M_{2})=\emptyset$ and $r(M_{1})=2=r(M_{2})$. But by $\cup\mathcal{B}(M_{1})=\{2,3,4\}$ and $\cup\mathcal{B}(M_{2})=\{1,2,3,4\}$, we know that $M_{1}$ and $M_{2}$ are not isomorphic. $\Box$

\end{proof}

The following theorem presents the relationship between union minimal matroids and intersection minimal matroids.

\begin{theorem}
\label{theorem334}
$M(E,\mathcal{I})$ is a union minimal matroid iff $M^{*}$ is an intersection minimal matroid.
\end{theorem}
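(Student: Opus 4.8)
The plan is to exploit the duality between the base family $\mathcal{B}(M)$ and the cobase family $\mathcal{B}(M^{*})=Com(\mathcal{B}(M))$, together with the elementary set-theoretic fact that complementation turns unions into intersections and is a bijection that reverses inclusion. Recall that $B\in\mathcal{B}(M^{*})$ iff $E-B\in\mathcal{B}(M)$, so the map $\varphi(B)=E-B$ is a bijection from $\mathcal{B}(M)$ onto $\mathcal{B}(M^{*})$. The key identities I would establish first are: for any matroid $N$ on $E$, $\cup\mathcal{B}(N^{*})=\cup Com(\mathcal{B}(N))=E-\cap\mathcal{B}(N)$ and dually $\cap\mathcal{B}(N^{*})=E-\cup\mathcal{B}(N)$; these follow immediately from De Morgan's laws applied to $\{E-B\mid B\in\mathcal{B}(N)\}$.

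The argument then proceeds as a chain of equivalences. Assume $M$ is a union minimal matroid; I want to show $M^{*}$ is intersection minimal. So let $M_{1}$ be any matroid on $E$ with $\cap\mathcal{B}(M_{1})=\cap\mathcal{B}(M^{*})$ and $\mathcal{B}(M_{1})\subseteq\mathcal{B}(M^{*})$; the goal is $\mathcal{B}(M_{1})=\mathcal{B}(M^{*})$. Pass to duals: set $N=M_{1}^{*}$. Then $N$ is a matroid on $E$, and from $\mathcal{B}(M_{1})\subseteq\mathcal{B}(M^{*})=Com(\mathcal{B}(M))$ and the fact that $Com$ is an inclusion-preserving bijection on subfamilies closed under the cobase relation, I get $\mathcal{B}(N)=Com(\mathcal{B}(M_{1}))\subseteq Com(Com(\mathcal{B}(M)))=\mathcal{B}(M)$. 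Likewise, using $M^{**}=M$ and the identity above, $\cap\mathcal{B}(M_{1})=\cap\mathcal{B}(M^{*})$ becomes $E-\cup\mathcal{B}(N)=E-\cup\mathcal{B}(M)$, i.e. $\cup\mathcal{B}(N)=\cup\mathcal{B}(M)$. Now union minimality of $M$ forces $\mathcal{B}(N)=\mathcal{B}(M)$, hence $\mathcal{B}(M_{1})=Com(\mathcal{B}(N))=Com(\mathcal{B}(M))=\mathcal{B}(M^{*})$, as desired. The converse direction is symmetric, using $M^{**}=M$ so that "$M^{*}$ intersection minimal $\Rightarrow$ $M=(M^{*})^{*}$ union minimal" is just the same implication applied to the matroid $M^{*}$ in place of $M$.

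The main thing to be careful about — and the only real obstacle — is the bookkeeping around $Com$: I must check that $\mathcal{B}(M_{1})\subseteq\mathcal{B}(M^{*})$ genuinely transfers to $\mathcal{B}(M_{1}^{*})\subseteq\mathcal{B}(M^{**})=\mathcal{B}(M)$, which requires knowing that each $M_{1}$ arising here really is $(M_{1}^{*})^{*}$ and that $\mathcal{B}(M_{1}^{*})=Com(\mathcal{B}(M_{1}))$; both are guaranteed by the definition of the dual matroid (Definition~\ref{definition8}) and the involutivity $M^{**}=M$, which I will invoke. Everything else is a routine De Morgan computation, so once the two identities $\cup\mathcal{B}(N^{*})=E-\cap\mathcal{B}(N)$ and $\cap\mathcal{B}(N^{*})=E-\cup\mathcal{B}(N)$ are in place, the proof is a short symmetric chase.
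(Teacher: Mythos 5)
Your proposal is correct and follows essentially the same route as the paper: both rest on the bijection $B\mapsto E-B$ between $\mathcal{B}(M)$ and $\mathcal{B}(M^{*})$ together with De Morgan's laws to convert $\cup\mathcal{B}(\cdot)$ into $\cap\mathcal{B}(\cdot)$ and to transfer the inclusion and equality of base families across duality. The paper merely writes the argument as one chain of biconditionals where you organize it as a single implication plus the symmetric converse via $M^{**}=M$; the substance is identical.
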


\begin{proof}
For any $B\in\mathcal{B}(M)$, denote $E-B$ as $B^{*}$. Since $B\in\mathcal{B}(M)\Leftrightarrow B^{*}\in\mathcal{B}(M^{*})$, then
$\cup\mathcal{B}(M_{1})=\cup\mathcal{B}(M)\\
\Leftrightarrow\cup_{B_{i}\in\mathcal{B}(M_{1})}B_{i}=\cup_{B\in\mathcal{B}(M)}B\\
\Leftrightarrow E-\cup_{B_{i}\in\mathcal{B}(M_{1})}B_{i}=E-\cup_{B\in\mathcal{B}(M)}B\\
\Leftrightarrow\cap_{B_{i}\in\mathcal{B}(M_{1})}(E-B_{i})=\cap_{B\in\mathcal{B}(M)}(E-B)\\
\Leftrightarrow\cap_{B_{i}\in\mathcal{B}(M_{1})}(B_{i}^{*})=\cap_{B\in\mathcal{B}(M)}B^{*}\\
\Leftrightarrow\cap_{B_{i}^{*}\in\mathcal{B}(M_{1}^{*})}(B_{i}^{*})=\cap_{B^{*}\in\mathcal{B}(M^{*})}B^{*}\\
\Leftrightarrow\cap\mathcal{B}(M_{1}^{*})=\cap\mathcal{B}(M^{*})$ and\\
$\mathcal{B}(M_{1})\subseteq\mathcal{B}(M)\\
\Leftrightarrow\forall B_{i}(B_{i}\in\mathcal{B}(M_{1})\rightarrow B_{i}\in\mathcal{B}(M))\\
\Leftrightarrow\forall B_{i}(B_{i}^{*}\in\mathcal{B}(M_{1}^{*})\rightarrow B_{i}^{*}\in\mathcal{B}(M^{*}))\\
\Leftrightarrow\forall B_{i}^{*}(B_{i}^{*}\in\mathcal{B}(M_{1}^{*})\rightarrow B_{i}^{*}\in\mathcal{B}(M^{*}))\\
\Leftrightarrow\mathcal{B}(M_{1}^{*})\subseteq\mathcal{B}(M^{*})$ and\\
$\mathcal{B}(M_{1})=\mathcal{B}(M)\Leftrightarrow\mathcal{B}(M_{1}^{*})=\mathcal{B}(M^{*})$. Therefore\\
$M$ is a union minimal matroid\\
$\Leftrightarrow \forall M_{1}(E,\mathcal{I}_{1})((\cup\mathcal{B}(M_{1})=\cup\mathcal{B}(M)\wedge \mathcal{B}(M_{1})\subseteq\mathcal{B}(M))\rightarrow(\mathcal{B}(M_{1})=\mathcal{B}(M)))\\
\Leftrightarrow\forall M_{1}(E,\mathcal{I}_{1})((\cap\mathcal{B}(M_{1}^{*})=\cap\mathcal{B}(M^{*})\wedge \mathcal{B}(M_{1}^{*})\subseteq\mathcal{B}(M)^{*})\rightarrow(\mathcal{B}(M_{1}^{*})=\mathcal{B}(M^{*})))\\
\Leftrightarrow\forall M_{1}^{*}((\cap\mathcal{B}(M_{1}^{*})=\cap\mathcal{B}(M^{*})\wedge \mathcal{B}(M_{1}^{*})\subseteq\mathcal{B}(M)^{*})\rightarrow(\mathcal{B}(M_{1}^{*})=\mathcal{B}(M^{*})))\\
\Leftrightarrow M^{*}$ is an intersection minimal matroid. $\Box$

\end{proof}

In Propositions~\ref{proposition51} and~\ref{proposition125} or Propositions~\ref{proposition305} and~\ref{proposition306}, we see that if a matroid is a unique expansion matroid or a unique partition matroid, there exists such an expression of the base family and such a property for the base family. The following theorem indicates that the expression and the property are equivalent.

\begin{theorem}
\label{theorem33}
Let $M(E,\mathcal{I})$ be a matroid and $P=\{K_{1},K_{2},\cdots,K_{t}\}$ is a partition on $\cup\mathcal{B}(M)$. Then for any $B\in\mathcal{B}(M)$ and any $K_{i}\in P$, $|B\cap K_{i}|=1$ iff  $\mathcal{B}(M)=\{\{b_{1},b_{2},\cdots,b_{t}\}|b_{i}\in K_{i},1\leq i\leq t\}$.
\end{theorem}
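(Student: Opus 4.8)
The plan is to prove the two implications separately, treating the equality $\mathcal{B}(M)=\{\{b_{1},b_{2},\cdots,b_{t}\}\mid b_{i}\in K_{i},1\leq i\leq t\}$ by double inclusion on each side. Throughout I will abbreviate $H=\{\{b_{1},b_{2},\cdots,b_{t}\}\mid b_{i}\in K_{i},1\leq i\leq t\}$. Note first that by Proposition~\ref{proposition100} every base has cardinality $r(M)$, and since $P$ is a partition of $\cup\mathcal{B}(M)$ with $t$ blocks, the condition "$|B\cap K_{i}|=1$ for every $i$" combined with $B\subseteq\cup\mathcal{B}(M)=\cup P$ forces $|B|=t$, so in fact $r(M)=t$ under the left-hand hypothesis.

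For the forward direction ($\Rightarrow$): assume every base meets every block of $P$ in exactly one point. Given $B\in\mathcal{B}(M)$, for each $i$ let $\{b_{i}\}=B\cap K_{i}$; then $B=\{b_{1},\dots,b_{t}\}$ with $b_{i}\in K_{i}$ because $B\subseteq\cup P$ and the blocks cover $B$ disjointly. Hence $\mathcal{B}(M)\subseteq H$. The reverse inclusion $H\subseteq\mathcal{B}(M)$ is the substantive part: given an arbitrary selection $\{b_{1},\dots,b_{t}\}$ with $b_{i}\in K_{i}$, I must show it is a base. I would run an exchange argument analogous to the proof of Proposition~\ref{propositionJ}: start from any base $B\in\mathcal{B}(M)$, which by hypothesis has the form $\{d_{1},\dots,d_{t}\}$ with $d_{i}\in K_{i}$, and successively swap $d_{i}$ for $b_{i}$. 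The key point at each step is that replacing one coordinate keeps us inside $\mathcal{B}(M)$: if $B'=\{b_{1},\dots,b_{s-1},d_{s},d_{s+1},\dots,d_{t}\}\in\mathcal{B}(M)$ and $b_{s}\neq d_{s}$, then since $b_{s}\in\cup\mathcal{B}(M)$ there is a base $D$ containing $b_{s}$; as $b_{s}\notin B'$ (it would have to lie in block $K_{s}$, but $B'\cap K_{s}=\{d_{s}\}$), Theorem~\ref{theorem123} provides $x\in B'-D$ with $(B'-\{x\})\cup\{b_{s}\}\in\mathcal{B}(M)$, and I must argue $x=d_{s}$: the new base meets $K_{s}$ in $b_{s}$, so by the single-intersection hypothesis applied to that base it meets $K_{s}$ in exactly that point, forcing the removed element $x$ to be the old representative $d_{s}$ of $K_{s}$. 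Iterating $s=1,\dots,t$ lands $\{b_{1},\dots,b_{t}\}$ in $\mathcal{B}(M)$.

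For the converse direction ($\Leftarrow$): assume $\mathcal{B}(M)=H$. Take any $B\in\mathcal{B}(M)$; then $B=\{b_{1},\dots,b_{t}\}$ with $b_{i}\in K_{i}$, and since the $K_{i}$ are pairwise disjoint (being partition blocks), the $t$ chosen elements are distinct and $B\cap K_{i}=\{b_{i}\}$, so $|B\cap K_{i}|=1$ for every $i$. That is immediate, so this direction is essentially free once the description $\mathcal{B}(M)=H$ is granted; the only thing to check carefully is that $P$ being a partition makes the coordinates genuinely distinct, so that a member of $H$ really has cardinality $t$ and meets each block once rather than collapsing.

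The main obstacle is the reverse inclusion $H\subseteq\mathcal{B}(M)$ in the forward direction. The difficulty is that the hypothesis only constrains existing bases, so I cannot directly test an arbitrary selector; I must build it up from a known base via basis exchange, and the delicate step is verifying that each single-coordinate swap removes exactly the representative of the block being modified — this is where Theorem~\ref{theorem123} together with the one-point-per-block hypothesis must be combined. Everything else (cardinality bookkeeping, the $\subseteq H$ direction, and the entire converse) is routine set manipulation using that $P$ is a partition and that all bases have equal size by Proposition~\ref{proposition100}.
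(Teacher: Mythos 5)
Your proposal is correct and follows essentially the same route as the paper's proof: establish $r(M)=t$, deduce $\mathcal{B}(M)\subseteq H$ by cardinality, and obtain $H\subseteq\mathcal{B}(M)$ by repeatedly applying Theorem~\ref{theorem123} and using the one-point-per-block hypothesis to force the exchanged-out element to be the current representative of the block being updated. The only difference is cosmetic: you iterate directly over the coordinates $s=1,\dots,t$, while the paper phrases the same swap as an induction on the number of mismatched coordinates.
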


\begin{proof}

($\Rightarrow$): We will firstly prove that $r(M)=t$. For any $B\in\mathcal{B}(M)$, we know that $B\subseteq\cup P$. If $r(M)<t$, there exists some $K_{i}\in P$ such that $|B\cap K_{i}|=0$. This is a contradiction
to the hypothesis. If $r(M)>t$, there exists some $K_{j}\in P$ such that $|B\cap K_{j}|\geq2$. This is a contradiction
to the hypothesis. Thus $r(M)=t$. Let $H=\{\{b_{1},b_{2},\cdots,b_{t}\}|b_{i}\in K_{i},1\leq i\leq t\}$. For any $B\in\mathcal{B}(M)$ and any $K_{i}\in P$, we let $\{b_{i}\}=B\cap K_{i}$. Thus $\{b_{1},b_{2},\cdots,b_{t}\}\subseteq B$. By $r(M)=t$, we have that $B=\{b_{1},b_{2},\cdots,b_{t}\}$. Thus $B\in H$. Therefore $\mathcal{B}(M)\subseteq H$.
Below we will prove that $H\subseteq\mathcal{B}(M)$. For any $\{a_{1},a_{2},\cdots,a_{t}\}\in H$, where $a_{i}\in K_{i}$ and $1\leq i\leq t$, since $a_{1}\in\cup\mathcal{B}(M)$, there exists some $B\in\mathcal{B}(M)$ such that $a_{1}\in B$. Since $\mathcal{B}(M)\subseteq H$, $B=\{a_{1},d_{2},\cdots,d_{t}\}$, where $d_{i}\in K_{i}$ and $2\leq i\leq t$. Let $|\{d_{i}\in B|d_{i}\neq a_{i},2\leq i\leq t\}|=v$. We use induction on $v$. If $v=1$, without
loss of generality, suppose $\{d_{i}\in B|d_{i}\neq a_{i},2\leq i\leq t\}=\{d_{2}\}$. Since $a_{2}\in\cup\mathcal{B}(M)$, there exists some $D\in\mathcal{B}(M)$ such that $a_{2}\in D$. Hence $a_{2}\in D-B$. By Theorem~\ref{theorem123}, we know that there exists some $g\in B-D$ such that $(B-\{g\})\cup\{a_{2}\}\in\mathcal{B}(M)$. We claim that $g=d_{2}$. Otherwise we have that $\{a_{2},d_{2}\}\subseteq(B-\{g\})\cup\{a_{2}\}$. Hence $|((B-\{g\})\cup\{a_{2}\})\cap K_{2}|\geq2$. This is a contradiction to the hypothesis. Hence $(B-\{g\})\cup\{a_{2}\}=\{a_{1},a_{2},\cdots,a_{t}\}$. Therefore $\{a_{1},a_{2},\cdots,a_{t}\}\in\mathcal{B}(M)$. Assume that $\{a_{1},a_{2},\cdots,a_{t}\}\in\mathcal{B}(M)$ if $|\{d_{i}\in B|d_{i}\neq a_{i},2\leq i\leq t\}|=v-1$. Now we suppose $|\{d_{i}\in B|d_{i}\neq a_{i},2\leq i\leq t\}|=v$. Without
loss of generality, suppose $d_{2}\in\{d_{i}\in B|d_{i}\neq a_{i},2\leq i\leq t\}$. We know that there exists some $D\in\mathcal{B}(M)$ such that $a_{2}\in D$. Hence $a_{2}\in D-B$. By Theorem~\ref{theorem123}, we know that there exists some $g\in B-D$ such that $(B-\{g\})\cup\{a_{2}\}\in\mathcal{B}(M)$. We claim that $g=d_{2}$. Otherwise we have that $\{a_{2},d_{2}\}\subseteq(B-\{g\})\cup\{a_{2}\}$. Hence $|((B-\{g\})\cup\{a_{2}\})\cap K_{2}|\geq2$. This is a contradiction to the hypothesis. Let $B_{1}=(B-\{g\})\cup\{a_{2}\}=\{a_{1},a_{2},\cdots,d_{t}\}$. Then $|\{d_{i}\in B_{1}|d_{i}\neq a_{i},2\leq i\leq t\}|=v-1$. By the assumption of induction, we have that $\{a_{1},a_{2},\cdots,a_{t}\}\in\mathcal{B}(M)$. Hence $H\subseteq\mathcal{B}(M)$. So $\mathcal{B}(M)=H$, therefore $\mathcal{B}(M)=\{\{b_{1},b_{2},\cdots,b_{t}\}|b_{i}\in K_{i},1\leq i\leq t\}$.

($\Leftarrow$): It follows obviously. $\Box$

\end{proof}

In order to indicate the above theorem from another perspective, we propose a new concept.

\begin{definition}(Combination number of a partition)
\label{definition335}
Let $P$ be a partition on $\cup P$. $\prod_{K\in P}\\
|K|$ is denoted as $Co(P)$ and called the combination number of $P$.
\end{definition}

By Theorem~\ref{theorem33} and the above definition, we have the following corollary.

\begin{corollary}
\label{corollary109}
Let $M$ be a matroid. If there exists a partition $P$ on $\cup\mathcal{B}(M)$ such that for any $B\in\mathcal{B}(M)$ and any $K\in P$, that $|B\cap K|=1$ follows, $|\mathcal{B}(M)|=Co(P)$.
\end{corollary}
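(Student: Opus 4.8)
The plan is to combine Theorem~\ref{theorem33} with the definition of the combination number. Under the hypothesis, we have a partition $P$ on $\cup\mathcal{B}(M)$ such that every base meets every block of $P$ in exactly one point. By the forward direction of Theorem~\ref{theorem33} (taking $P=\{K_1,K_2,\cdots,K_t\}$), this is equivalent to $\mathcal{B}(M)=\{\{b_1,b_2,\cdots,b_t\}\mid b_i\in K_i,\ 1\leq i\leq t\}$; that is, the bases are exactly the transversals of $P$ obtained by choosing one element from each block.

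The remaining task is purely a counting argument: count the transversals. First I would observe that the map sending a tuple $(b_1,b_2,\cdots,b_t)\in K_1\times K_2\times\cdots\times K_t$ to the set $\{b_1,b_2,\cdots,b_t\}$ is a bijection onto $\mathcal{B}(M)$. Surjectivity is immediate from the description of $\mathcal{B}(M)$. For injectivity, I would use that the blocks $K_i$ are pairwise disjoint (since $P$ is a partition): if $\{b_1,\cdots,b_t\}=\{b_1',\cdots,b_t'\}$ with $b_i,b_i'\in K_i$, then for each $i$ the unique element of the common set lying in $K_i$ is both $b_i$ and $b_i'$, so the tuples coincide. Hence $|\mathcal{B}(M)|=|K_1\times\cdots\times K_t|=\prod_{i=1}^{t}|K_i|=\prod_{K\in P}|K|=Co(P)$ by Definition~\ref{definition335}.

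I do not expect any serious obstacle here; the corollary is essentially a restatement of Theorem~\ref{theorem33} packaged with a one-line product-rule count. The only point requiring a little care is the injectivity of the tuple-to-set map, and this rests entirely on the disjointness of the partition blocks — without it, two distinct tuples could yield the same set and the count $Co(P)$ would be an overcount. Since $P$ is assumed to be a partition, this is automatic, so the proof is short.
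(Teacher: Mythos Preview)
Your proposal is correct and matches the paper's approach: the paper simply states that the corollary follows from Theorem~\ref{theorem33} and Definition~\ref{definition335}, and you have supplied exactly the routine counting argument this entails. Your explicit verification that the tuple-to-set map is a bijection (via disjointness of the blocks) is a reasonable detail to spell out, though the paper leaves it implicit.
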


Suppose that there exist two partitions which both satisfy the conditions given in Theorem~\ref{theorem33}. The following proposition indicates that they are equal.

\begin{proposition}
\label{proposition103}
Let $M$ be a matroid and $P$ and $Q$ be two partitions on $\cup\mathcal{B}(M)$. Let $P$ and $Q$ satisfy that for any $B\in\mathcal{B}(M)$, any $K\in P$ and any $L\in Q$, it follows that $|B\cap K|=1$ and $|B\cap L|=1$. Then $P=Q$.
\end{proposition}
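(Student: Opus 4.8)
The plan is to show that $P$ and $Q$ must coincide block-by-block, by exploiting the fact that bases pick exactly one element from each block of either partition. First I would fix an arbitrary element $x \in \cup\mathcal{B}(M)$ and let $K \in P$ and $L \in Q$ be the unique blocks containing $x$ (unique since $P$ and $Q$ are partitions). It suffices to prove $K = L$; since $x$ was arbitrary and every element of $\cup\mathcal{B}(M)$ lies in some block of each partition, this gives $P = Q$.

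To prove $K \subseteq L$, take any $y \in K$. I want to produce a base $B \in \mathcal{B}(M)$ with $x, y \notin B$ but such that, after a single exchange, I can compare how $P$ and $Q$ see $x$ and $y$. Concretely: since $x \in \cup\mathcal{B}(M)$, pick $B_0 \in \mathcal{B}(M)$ with $x \in B_0$. Because $|B_0 \cap K| = 1$, we have $B_0 \cap K = \{x\}$, so $y \notin B_0$ (as $y \in K$, $y \neq x$ would force two elements of $K$ in $B_0$; if $y = x$ we are done trivially). Now $y \in \cup\mathcal{B}(M)$, so there is $D \in \mathcal{B}(M)$ with $y \in D$; then $y \in D - B_0$, and by Theorem~\ref{theorem123} there exists $g \in B_0 - D$ with $(B_0 - \{g\}) \cup \{y\} \in \mathcal{B}(M)$. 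Call this base $B_1$. I claim $g = x$: indeed, if $g \neq x$ then $x \in B_1$ and $y \in B_1$, so $B_1$ contains both $x$ and $y$, which are distinct elements of $K$ — contradicting $|B_1 \cap K| = 1$. Hence $g = x$, so $B_1 = (B_0 - \{x\}) \cup \{y\}$.

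Now I compare the two bases $B_0$ and $B_1$ through the eyes of $Q$. We have $x \in L$ and $|B_0 \cap L| = 1$, so $B_0 \cap L = \{x\}$. Also $|B_1 \cap L| = 1$; but $B_1 = (B_0 - \{x\}) \cup \{y\}$, so $B_1 \cap L = ((B_0 - \{x\}) \cap L) \cup (\{y\} \cap L) = (\{x\} - \{x\}) \cup (\{y\} \cap L) = \{y\} \cap L$. For this to have cardinality $1$ we must have $y \in L$. This shows $K \subseteq L$. By the symmetric argument (swapping the roles of $P$ and $Q$), $L \subseteq K$, hence $K = L$, and therefore $P = Q$.

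The main obstacle is engineering the single-exchange base $B_1$ that differs from $B_0$ exactly by swapping $x$ out and $y$ in; once that is in hand, the cardinality bookkeeping with respect to $Q$ is immediate. The key leverage is Theorem~\ref{theorem123} together with the observation that the ``$|B \cap K| = 1$'' hypothesis forces the exchanged-out element to be precisely $x$ — without that forcing, the exchange could remove some irrelevant element and the comparison would collapse.
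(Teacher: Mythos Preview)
Your proof is correct. The argument differs from the paper's in one interesting respect: the paper argues by contradiction and invokes Theorem~\ref{theorem33} as a black box --- given $a\in K\cap L$ and $b\in K-L$, Theorem~\ref{theorem33} applied to the partition $Q$ immediately yields a base containing both $a$ and $b$ (since $a$ and $b$ lie in distinct blocks of $Q$), and then $|B\cap K|\ge 2$ gives the contradiction. You instead work directly: starting from a base $B_0$ containing $x$, you use a single application of the exchange property (Theorem~\ref{theorem123}) to manufacture the base $B_1=(B_0-\{x\})\cup\{y\}$, with the hypothesis on $P$ forcing the exchanged-out element to be exactly $x$, and then read off $y\in L$ from the hypothesis on $Q$. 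Your route is more self-contained --- it bypasses Theorem~\ref{theorem33}, whose proof is itself an inductive exchange argument --- at the cost of a slightly longer write-up. The paper's route is terser but leans on the heavier preceding result.
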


\begin{proof}

We use the proof by contradiction. Suppose $P\neq Q$. Then there exists some $a\in \cup\mathcal{B}(M)$, $K\in P$ and $L\in Q$ such that $a\in K\cap L$ and $K\neq L$. Then $K-L\neq\emptyset$ or $L-K\neq\emptyset$. Without
loss of generality, suppose $K-L\neq\emptyset$. Let $b\in K-L$. By $a\in L$, $b\notin L$ and  Theorem~\ref{theorem33}, we know that there exists some $B\in\mathcal{B}(M)$ such that $\{a,b\}\subseteq B$. Thus $|B\cap K|\geq2$. This is contradictory. $\Box$

\end{proof}

The following theorem gives a relationship between unique expansion matroids and union minimal matroids.

\begin{theorem}
\label{theorem552}
If $M$ is a unique expansion matroid, $M$ is a union minimal matroid.
\end{theorem}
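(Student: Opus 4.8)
The plan is to use the structural description of unique expansion matroids obtained in Subsection~3.2, in particular Theorem~\ref{theorem50} (which says $F(M)$ is a partition on $\cup\mathcal{B}(M)$) and Proposition~\ref{proposition125} (which says $\mathcal{B}(M)=\{\{b_{1},\dots,b_{t}\}\mid b_{i}\in K_{i},\ 1\le i\le t\}$ where $F(M)=\{K_{1},\dots,K_{t}\}$). So let $M$ be a unique expansion matroid and suppose $M_{1}(E,\mathcal{I}_{1})$ is a matroid with $\cup\mathcal{B}(M_{1})=\cup\mathcal{B}(M)$ and $\mathcal{B}(M_{1})\subseteq\mathcal{B}(M)$; I must show $\mathcal{B}(M_{1})=\mathcal{B}(M)$. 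By Definition~\ref{definition37} this is exactly what is needed.

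First I would record that $r(M_{1})=r(M)=t$ (all bases of $M$ have size $t$ by Proposition~\ref{proposition100}, and $\mathcal{B}(M_{1})\subseteq\mathcal{B}(M)$ forces the bases of $M_{1}$ to have this same cardinality, using $\mathcal{B}(M_{1})\ne\emptyset$ from (B1)). Next I would show that $F(M)$ still ``sees'' every element of $\cup\mathcal{B}(M_{1})$ exactly once in each base of $M_{1}$: since $\mathcal{B}(M_{1})\subseteq\mathcal{B}(M)$, Proposition~\ref{proposition51} (or Proposition~\ref{propositionH}) applied to $M$ gives that for every $B\in\mathcal{B}(M_{1})$ and every $K\in F(M)$ we have $|B\cap K|=1$. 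The crucial point is then to prove that in fact every block $K_{i}$ of $F(M)$ is ``fully used'' by $\mathcal{B}(M_{1})$, i.e.\ $\bigcup\mathcal{B}(M_{1})$ meets each $K_{i}$ in all of $K_{i}$; this follows from $\cup\mathcal{B}(M_{1})=\cup\mathcal{B}(M)=\cup F(M)=\bigsqcup_{i}K_{i}$ (Proposition~\ref{proposition46} together with Theorem~\ref{theorem50}). Hence for each $i$ and each $b\in K_{i}$ there is a base of $M_{1}$ containing $b$.

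Now the heart of the argument: I would show $\mathcal{B}(M)\subseteq\mathcal{B}(M_{1})$ by taking an arbitrary base $B=\{b_{1},\dots,b_{t}\}\in\mathcal{B}(M)$ with $b_{i}\in K_{i}$ (using the explicit form of $\mathcal{B}(M)$ from Proposition~\ref{proposition125}) and transforming a known base of $M_{1}$ into $B$ one coordinate at a time via the base-exchange Theorem~\ref{theorem123}, staying inside $\mathcal{B}(M_{1})$ at every step. Concretely: pick any $B_{0}\in\mathcal{B}(M_{1})$; since $b_{1}\in\cup\mathcal{B}(M_{1})$, there is $B'\in\mathcal{B}(M_{1})$ with $b_{1}\in B'$, and because $|B'\cap K_{1}|=1$ we must have $B'\cap K_{1}=\{b_{1}\}$. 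Suppose inductively that some base of $M_{1}$ agrees with $B$ on the blocks $K_{1},\dots,K_{s}$; I want to push this to $K_{1},\dots,K_{s+1}$. The element currently occupying block $K_{s+1}$ in that base, say $d$, differs from $b_{s+1}$ in general; since $b_{s+1}\in\cup\mathcal{B}(M_{1})$ there is a base $D\in\mathcal{B}(M_{1})$ with $b_{s+1}\in D$, so $b_{s+1}\in D-B_{(s)}$, and Theorem~\ref{theorem123} yields $g\in B_{(s)}-D$ with $(B_{(s)}-\{g\})\cup\{b_{s+1}\}\in\mathcal{B}(M_{1})$. The uniqueness/partition structure forces $g=d$: the new set still lies in $\mathcal{B}(M_{1})\subseteq\mathcal{B}(M)$, so $|(\text{new set})\cap K_{s+1}|=1$, and since it contains $b_{s+1}\in K_{s+1}$ it cannot also contain $d\in K_{s+1}$, hence $g=d$ and the new base agrees with $B$ on $K_{1},\dots,K_{s+1}$ (the removed $g=d$ lay in $K_{s+1}$, so the first $s$ blocks are untouched). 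Iterating to $s=t$ gives $B\in\mathcal{B}(M_{1})$. Combined with $\mathcal{B}(M_{1})\subseteq\mathcal{B}(M)$ this yields $\mathcal{B}(M_{1})=\mathcal{B}(M)$, so $M$ is union minimal.

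I expect the main obstacle to be the bookkeeping in the inductive exchange step — specifically, making sure that the exchanged-out element $g$ is exactly the ``wrong'' element $d$ of block $K_{s+1}$ (so that the already-fixed blocks $K_{1},\dots,K_{s}$ are not disturbed), which is where the partition property of $F(M)$ and the constraint $|B'\cap K_{i}|=1$ for bases $B'\in\mathcal{B}(M_{1})\subseteq\mathcal{B}(M)$ are used decisively. This is essentially the same coordinate-by-coordinate exchange argument already carried out in the proof of Theorem~\ref{theorem33}, so it can alternatively be phrased as: $\mathcal{B}(M_{1})$ is a nonempty family of $t$-subsets of $\cup\mathcal{B}(M)$, each meeting every $K_{i}$ exactly once, with $\bigcup\mathcal{B}(M_{1})=\cup\mathcal{B}(M)$, and such a family satisfying (B1),(B2) must be all of $\{\{b_{1},\dots,b_{t}\}\mid b_{i}\in K_{i}\}$ by (the $\Leftarrow$-free direction of) Theorem~\ref{theorem33} applied with the partition $P=F(M)$.
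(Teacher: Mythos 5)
Your proposal is correct and follows essentially the same route as the paper: both reduce to the observation that every base of $M_{1}$, being a base of $M$, meets each block of the partition $F(M)$ exactly once, and then invoke the machinery of Theorem~\ref{theorem33} applied to $M_{1}$ with $P=F(M)$. The only cosmetic difference is that the paper extracts a cardinality count from this ($|\mathcal{B}(M_{1})|=Co(F(M))=|\mathcal{B}(M)|$ via Corollary~\ref{corollary109}, then uses finiteness plus containment), whereas you extract the explicit description of $\mathcal{B}(M_{1})$ (equivalently, re-run the coordinate-by-coordinate exchange induction inside $M_{1}$).
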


\begin{proof}
For any $B\in\mathcal{B}(M)$ and any $K\in F(M)$, by Proposition~\ref{proposition51}, we know that $|B\cap K|=1$. Let $M_{1}(E,\mathcal{I}_{1})$ be a matroid which satisfies $\mathcal{B}(M_{1})\subseteq\mathcal{B}(M)$ and $\cup\mathcal{B}(M_{1})=\cup\mathcal{B}(M)$. By Proposition~\ref{proposition46}, we know that $\cup F(M)=\cup\mathcal{B}(M)=\cup\mathcal{B}(M_{1})$. Since $\mathcal{B}(M_{1})\subseteq\mathcal{B}(M)$, for any $D\in\mathcal{B}(M_{1})$, we have that $D\in\mathcal{B}(M)$. Hence for any $K\in F(M)$, we have that $|D\cap K|=1$. By Corollary~\ref{corollary109}, we know that $|\mathcal{B}(M_{1})|=Co(F(M))=|\mathcal{B}(M)|$. Since $\mathcal{B}(M)$ is a finite set, $\mathcal{B}(M_{1})=\mathcal{B}(M)$. Therefore $M$ is a union minimal matroids. $\Box$

\end{proof}

Since unique partition matroids and unique expansion matroids are the same, unique partition matroids are union minimal matroids.

\section{Unique exchange matroid}
\label{Unique exchange matroid}

In this section, we extend the concept of unique expansion matroid to unique exchange matroid and prove that both unique expansion matroids and their dual matroids are unique exchange matroids. Let us see an example.

\begin{example}
\label{example338}
Let $E=\{1,2,3,4,5\}$, $\mathcal{B}=\{\{1,2,3\},\{1,2,4\},\{1,3,4\},\{1,2,5\},\{1,
4,\\
5\}\}$ and $M=(E,Low(\mathcal{B}))$. Then $M$ is a matroid. Let $B_{1}=\{1,2,3\}$, $B_{2}=\{1,4,5\}$, $x=3$, $y_{1}=4$ and $y_{2}=5$. It is obvious $x\in B_{1}-B_{2}$, $y_{1}\in B_{2}-B_{1}$, $y_{2}\in B_{2}-B_{1}$, $(B_{1}-\{x\})\cup\{y_{1}\}\in\mathcal{B}(M)$ and $(B_{1}-\{x\})\cup\{y_{2}\}\in\mathcal{B}(M)$. Then the $y$ which satisfies $y\in B_{2}-B_{1}$ and $(B_{1}-\{x\})\cup\{y\}\in\mathcal{B}(M)$ is not unique.
\end{example}

If we require that such exchange of elements is unique, we obtain a special type of matroids.

\begin{definition}(Unique exchange matroid)
\label{definition117}
Let $M$ be a matroid. For any $B_{1}\in\mathcal{B}(M)$ and any $B_{2}\in\mathcal{B}(M)$, if by $x\in B_{1}-B_{2}$, $y_{1}\in B_{2}-B_{1}$, $y_{2}\in B_{2}-B_{1}$, $(B_{1}-\{x\})\cup\{y_{1}\}\in\mathcal{B}(M)$ and $(B_{1}-\{x\})\cup\{y_{2}\}\in\mathcal{B}(M)$, we obtain $y_{1}=y_{2}$, $M$ is called a unique exchange matroid.
\end{definition}

There is a simple relationship between unique expansion matroids and unique exchange matroids.

\begin{proposition}
\label{proposition118}
If $M$ is a unique expansion matroid, $M$ is a unique exchange matroid.
\end{proposition}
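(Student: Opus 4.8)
The plan is to reduce the unique exchange condition directly to the unique expansion property. Suppose $M$ is a unique expansion matroid, and take bases $B_1, B_2 \in \mathcal{B}(M)$ together with $x \in B_1 - B_2$, and $y_1, y_2 \in B_2 - B_1$ satisfying $(B_1 - \{x\}) \cup \{y_1\} \in \mathcal{B}(M)$ and $(B_1 - \{x\}) \cup \{y_2\} \in \mathcal{B}(M)$. The obvious move is to set $A = B_1 - \{x\}$. Since $|B_1| = r(M)$ and $x \in B_1$, we have $|A| = r(M) - 1$, and $A \in \mathcal{I}(M)$ by (I2), so $A \in s(M)$.

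Next I would locate a base of $M$ that contains both $y_1$ and $y_2$, so that the unique expansion hypothesis can be applied with that base in the role of ``$B$'' and $A$ in the role of the secondary base. The natural candidate is $B_2$ itself: both $y_1$ and $y_2$ lie in $B_2 - B_1 \subseteq B_2$. Thus $y_1 \in B_2$, $y_2 \in B_2$, $A \cup \{y_1\} \in \mathcal{B}(M)$, and $A \cup \{y_2\} \in \mathcal{B}(M)$. Applying Definition~\ref{definition49} with $B = B_2$, $A$ as the secondary base, $e_1 = y_1$ and $e_2 = y_2$, we conclude $y_1 = y_2$, which is exactly what is required.

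The only point needing care — and the place I expect the main (minor) obstacle — is verifying that all the hypotheses of Definition~\ref{definition49} are genuinely met: namely that $A$ really is a secondary base (it is, since $A \subseteq B_1 \in \mathcal{B}(M)$ forces $A \in \mathcal{I}(M)$ with $|A| = r(M)-1$, using that $x \notin B_2$ guarantees $x \in B_1$ and hence $r(M) > 0$), and that $y_1, y_2$ are elements of the base $B_2$. Both are immediate from the set-theoretic containments, so no real difficulty arises; the proof is essentially a one-line application of the definition once the correct substitution is identified. I would present it as: set $A = B_1 - \{x\}$, observe $A \in s(M)$ and $y_1, y_2 \in B_2 \in \mathcal{B}(M)$ with $A \cup \{y_1\}, A \cup \{y_2\} \in \mathcal{B}(M)$, then invoke the unique expansion property to get $y_1 = y_2$.
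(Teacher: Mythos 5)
Your proof is correct and coincides with the paper's own argument: both set $A=B_{1}-\{x\}$, observe $A\in s(M)$ and $y_{1},y_{2}\in B_{2}\in\mathcal{B}(M)$, and then invoke Definition~\ref{definition49} with $B=B_{2}$ to conclude $y_{1}=y_{2}$. Your extra care in checking that $A$ really is a secondary base is a welcome addition but does not change the route.
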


\begin{proof}

For any $B_{1}\in\mathcal{B}(M)$ and any $B_{2}\in\mathcal{B}(M)$, by $x\in B_{1}-B_{2}$, $y_{1}\in B_{2}-B_{1}$ and $y_{2}\in B_{2}-B_{1}$, we have that $B_{1}-\{x\}\in s(M)$, $y_{1}\in B_{2}$ and $y_{2}\in B_{2}$. Then by $(B_{1}-\{x\})\cup\{y_{1}\}\in\mathcal{B}(M)$ and $(B_{1}-\{x\})\cup\{y_{2}\}\in\mathcal{B}(M)$, we obtain that $y_{1}=y_{2}$. Thus $M$ is a unique exchange matroid. $\Box$

\end{proof}

The converse of the above proposition is not true. Let us see an example.

\begin{example}
\label{example119}
Let $E=\{1,2,3,4\}$, $\mathcal{B}=\{\{1,2,3\},\{1,2,4\},\{1,3,4\}\}$ and $M=(E,Low\\
(\mathcal{B}))$. Then $M\in B_{u}$. But by $3\in\{1,3,4\}\in\mathcal{B}(M)$, $4\in\{1,3,4\}\in\mathcal{B}(M)$, $\{1,2\}\in s(M)$, $\{1,2,3\}\in\mathcal{B}(M)$ and $\{1,2,4\}\in\mathcal{B}$, we know that $M\notin S_{b}$.
\end{example}

The following proposition gives a relationship between the dual matroids of unique expansion matroids and unique exchange matroids.

\begin{theorem}
\label{theorem120}
If $M$ is a unique expansion matroid, $M^{*}$ is a unique exchange matroid.
\end{theorem}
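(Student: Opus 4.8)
The plan is to translate the unique exchange property for $M^{*}$ into a statement about $M$, using the standard correspondence between bases of $M$ and co-bases $E-B$ of $M^{*}$, and then invoke Theorem~\ref{theorem120}'s hypothesis via the already-established structure of unique expansion matroids. Concretely, let $B_{1}^{*}, B_{2}^{*}\in\mathcal{B}(M^{*})$ and suppose $x\in B_{1}^{*}-B_{2}^{*}$, $y_{1},y_{2}\in B_{2}^{*}-B_{1}^{*}$ with $(B_{1}^{*}-\{x\})\cup\{y_{1}\}\in\mathcal{B}(M^{*})$ and $(B_{1}^{*}-\{x\})\cup\{y_{2}\}\in\mathcal{B}(M^{*})$; the goal is $y_{1}=y_{2}$. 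Writing $B_{i}=E-B_{i}^{*}\in\mathcal{B}(M)$, one checks that $x\in B_{2}-B_{1}$, that $y_{1},y_{2}\in B_{1}-B_{2}$, and that $(B_{1}-\{y_{j}\})\cup\{x\}=E-\big((B_{1}^{*}-\{x\})\cup\{y_{j}\}\big)\in\mathcal{B}(M)$ for $j=1,2$. So the claim reduces to: in a unique expansion matroid $M$, if $B_{1},B_{2}\in\mathcal{B}(M)$, $x\in B_{2}-B_{1}$, $y_{1},y_{2}\in B_{1}-B_{2}$, and $(B_{1}-\{y_{1}\})\cup\{x\}$ and $(B_{1}-\{y_{2}\})\cup\{x\}$ are both bases, then $y_{1}=y_{2}$.

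First I would settle this reduced statement using the forming base family. Since $M$ is a unique expansion matroid, by Theorem~\ref{theorem50} $F(M)$ is a partition on $\cup\mathcal{B}(M)$, and by Proposition~\ref{proposition51} every base meets every block of $F(M)$ in exactly one point. Now $(B_{1}-\{y_{j}\})\cup\{x\}\in\mathcal{B}(M)$ means $x\in K_{M}(B_{1}-\{y_{j}\})$, i.e. $x$ lies in the block of $F(M)$ associated with the secondary base $B_{1}-\{y_{j}\}$. By Lemma~\ref{lemmaE} (applied to the base $B_{1}$), for a fixed element $x'\in B_{1}$ there is exactly one block of $F_{M}(B_{1})$ containing $x'$; dually, the map $b\mapsto K_{M}(B_{1}-\{b\})$ is a bijection from $B_{1}$ onto $F_{M}(B_{1})$, and $b$ is the unique element of $B_{1}$ lying in $K_{M}(B_{1}-\{b\})$ (this is exactly what the proof of Lemma~\ref{lemmaE} shows). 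Since $x\notin B_{1}$, the element $x$ can belong to the block $K_{M}(B_{1}-\{b\})$ for at most one $b\in B_{1}$: indeed, if $x\in K_{M}(B_{1}-\{y_{1}\})$ and $x\in K_{M}(B_{1}-\{y_{2}\})$ with $y_{1}\neq y_{2}$, these are two distinct blocks of the partition $F(M)$ (distinct because $y_{1}\in K_{M}(B_{1}-\{y_{2}\})$ but $y_{1}\notin K_{M}(B_{1}-\{y_{1}\})$, as in the proof of Lemma~\ref{lemmaE}), yet they share the point $x$, contradicting that $F(M)$ is a partition. Hence $y_{1}=y_{2}$.

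Assembling the pieces: the bijective correspondence $B\leftrightarrow E-B$ between $\mathcal{B}(M)$ and $\mathcal{B}(M^{*})$ (Theorem~\ref{theorem7}, Definition~\ref{definition8}) converts the hypotheses of the unique exchange condition for $M^{*}$ into the hypotheses of the reduced statement for $M$, and the reduced statement is exactly what the preceding paragraph proves. Therefore $M^{*}$ is a unique exchange matroid.

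The main obstacle I anticipate is purely bookkeeping: verifying carefully that $x\in B_{1}^{*}-B_{2}^{*}$ and $y_{j}\in B_{2}^{*}-B_{1}^{*}$ translate (after complementation) into $x\in B_{2}-B_{1}$ and $y_{j}\in B_{1}-B_{2}$ with the right membership/non-membership patterns, and that $E-\big((B_{1}^{*}-\{x\})\cup\{y_{j}\}\big)$ really equals $(B_{1}-\{y_{j}\})\cup\{x\}$ as sets (which uses $x\notin B_{1}^{*}$ wait — $x\in B_1^*$, so $x\notin B_1$, and $y_j\notin B_1^*$, so $y_j\in B_1$; these are exactly the facts that make the set identity hold). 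Once the translation is pinned down, the combinatorial core is immediate from the partition property of $F(M)$, so no genuine difficulty remains beyond this indexing care.
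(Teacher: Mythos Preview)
Your argument is correct in substance, though note a small slip: you wrote ``$y_{1}\in K_{M}(B_{1}-\{y_{2}\})$ but $y_{1}\notin K_{M}(B_{1}-\{y_{1}\})$,'' which is exactly backwards (in fact $y_{1}\in K_{M}(B_{1}-\{y_{1}\})$ since $B_{1}$ is a base, and $y_{1}\notin K_{M}(B_{1}-\{y_{2}\})$ since $y_{1}$ already lies in $B_{1}-\{y_{2}\}$). This does not damage the proof: the only thing you need is that the two blocks are distinct, and that is precisely what the proof of Proposition~\ref{proposition341} establishes. With that correction, your contradiction with the partition property of $F(M)$ (Theorem~\ref{theorem50}) goes through, and the complementation bookkeeping you outline is accurate.

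Your route differs from the paper's. The paper does \emph{not} translate back to $M$; it stays in $\mathcal{B}(M^{*})$ throughout, invoking Theorem~\ref{theorem52} to write $M=M_{E}(P)$ and then using Proposition~\ref{proposition339} (the explicit description of co-bases of a unique partition matroid) to argue directly: since $x\in B_{1}^{*}-B_{2}^{*}$ one has $x\in\cup P$, hence $x$ lies in a unique block $D\in P$; the condition $|D-B_{1}^{*}|=1$ then forces any $y$ with $(B_{1}^{*}-\{x\})\cup\{y\}\in\mathcal{B}(M^{*})$ to be the single element of $D-B_{1}^{*}$. Your approach instead dualizes the whole unique-exchange hypothesis to a statement about bases of $M$ and then appeals to the partition structure of $F(M)$ rather than of $P$ (these coincide by Theorem~\ref{theorem321}). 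Both arguments rest on the same partition phenomenon; the paper's version is slightly more direct because it avoids the complementation step and pins down $y$ constructively as the unique point of $D-B_{1}^{*}$, whereas yours obtains $y_{1}=y_{2}$ by contradiction. Either is fine.
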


\begin{proof}
By Theorem~\ref{theorem52}, we know that there exists some partition $P$ which satisfies $\cup P\subseteq E(M)$ and $M=M_{E(M)}(P)$. Let $B_{1},B_{2}\in\mathcal{B}(M^{*})$ and $x\in B_{1}-B_{2}$. Let $y_{1}\in B_{2}-B_{1}$ and $(B_{1}-\{x\})\cup\{y_{1}\}\in\mathcal{B}(M^{*})$. It is obvious $y_{1}\neq x$. By Proposition~\ref{proposition339}, we know that $E(M)-\cup P\subseteq B_{2}$. We claim that $x\in\cup P$. Otherwise, $x\in B_{2}$. It is contradictory. Then there exists one and only one block of $P$, say $D$, such that $x\in D$. By Proposition~\ref{proposition339}, we know that $|D-B_{1}|=1$. We claim that $y_{1}\in D$. Otherwise, suppose $y_{1}\notin D$. Then $|D-((B_{1}-\{x\})\cup\{y_{1}\})|=|D-B_{1}|+1=2$. It is contradictory. By $y_{1}\in D-B_{1}$ and $|D-B_{1}|=1$, we know that $\{y_{1}\}=D-B_{1}$. Similarly, if $y_{2}\in B_{2}-B_{1}$ and $(B_{1}-\{x\})\cup\{y_{2}\}\in\mathcal{B}(M^{*})$, $\{y_{2}\}=D-B_{1}$. Therefore $y_{1}=y_{2}$. Thus $M^{*}$ is a unique exchange matroid. $\Box$

\end{proof}

The following example indicates that the converse of the above proposition is not true.

\begin{example}
\label{example121}
Let $E=\{1,2,3,4\}$, $\mathcal{B}=\{\{1,2\},\{1,3\},\{1,4\}\}$ and $M=(E,Low
(\mathcal{B}))$. Then $M$ is a unique exchange matroid, but $M^{*}$ is not a unique expansion matroid.
\end{example}

\section{Conclusions }
\label{S:Conclusions}

From the viewpoint of the expansion uniqueness of the independent sets of matroids, this paper defined unique expansion matroids. From the viewpoint of the minimality of the base families of matroids, this paper defined union minimal matroids. Some properties of these two types of matroids were given. We proved that unique expansion matroids are union minimal matroids. In addition, we extend the concept of unique expansion matroid to unique exchange matroid. All of these open up a new point for understanding matroids.

\section*{Acknowledgments}
This work is supported in part by the National Natural Science Foundation of China under Grant No. 61170128, the Natural Science Foundation of Fujian Province, China, under Grant No. 2012J01294, and the Science and Technology Key Project of Fujian Province, China, under Grant No. 2012H0043.


\end{document}